\providecommand{\restriction}{\upharpoonright}
  \let\setminus\smallsetminus% no U+29F5 in TG Pagella Math?
\DeclareMathOperator{\CLIQUE}{CLIQUE}
\DeclareMathOperator{\rank}{rank}
\newcommand{\nnegrk}{\rank_{+}} % nonnegative rank
\DeclareMathOperator{\vertexset}{vert}
\newcommand{\RR}{\mathbb{R}}
\newcommand{\R}{\RR}
\newcommand {\abs}[1]{\left|#1\right|}
\newcommand {\SDP}{\mathbb S_+}
\newcommand {\SYM}{\mathbb S}
\DeclareMathOperator{\diag}{diag}
\DeclareMathOperator{\xc}{xc}
\newcommand{\xcp}{\xc_{SDP}}
\DeclareMathOperator{\CUT}{CUT}
\DeclareMathOperator{\COR}{COR}
\DeclareMathOperator{\CUTCONE}{CUT-CONE}
\DeclareMathOperator{\CORCONE}{COR-CONE}
\newcommand{\inp}[2]{\langle #1,#2 \rangle}
\newcommand{\onorm}[1]{\|#1\|_{1}}
\newcommand{\conv}[1]{\operatorname{conv}\left(#1\right)}
\newcommand{\rec}[1]{\operatorname{rec}\left(#1\right)}
\newcommand{\cone}[1]{\operatorname{cone}\left(#1\right)}
\newcommand {\card}[1]{\left|#1\right|}
\DeclareMathOperator{\Row}{Row}
\DeclareMathOperator{\Col}{Col}
\newcommand {\binSet}{\{0,1\}}
\newcommand*{\set}[2]{\left\{#1\,\middle|\,#2\right\}}
\newcommand*{\probability}[1]{\operatorname{\mathbb{P}}\left[#1\right]}
\newcommand*{\probProv}[2]{\operatorname{\mathbb{P}}%
  \left[#1\,\middle|\,#2\right]}
\newcommand*{\expectedValue}[1]{\operatorname{\mathbb{E}}\left[#1\right]}
\newcommand*{\expectProv}[2]{\operatorname{\mathbb{E}}%
  \left[#1\,\middle|\,#2\right]}
\newcommand*{\entropy}[1]{H\left(#1\right)}
\newcommand*{\entropyProv}[2]{H\left(#1\,\middle|\,#2\right)}
\DeclareMathOperator{\ErowBig}{row-big}
\DeclareMathOperator{\EcolBig}{column-big}
\DeclareMathOperator{\Esmall}{small} % \small is a LaTeX command
\newcommand*{\size}[1]{\left|#1\right|}
\newcommand*{\maxNorm}[1]{\left\|#1\right\|_{\infty}}
\newtheorem{thm}{Theorem}
\newtheorem{lem}[thm]{Lemma}
\newtheorem{prop}[thm]{Proposition}
\theoremstyle{definition}
\newtheorem{ex}{Example}
\title{Approximation Limits of Linear Programs\\ (Beyond Hierarchies)}
\author[1]{Gábor Braun}
\affil[1]{Universität Leipzig,
 Institut für Informatik,
 PF 100920,
 04009 Leipzig,
 Germany.
 \textit{Email:}~gabor.braun@informatik.uni-leipzig.de}
\author[2]{Samuel Fiorini}
\affil[2]{Department of Mathematics, 
  Université libre de Bruxelles CP 216, 
  Bd. du Triomphe, 
  1050 Brussels, 
  Belgium. 
  \textit{Email:}~sfiorini@ulb.ac.be}
\author[3]{Sebastian Pokutta}
\affil[3]{ISyE, Georgia Institute of Technology,
  Atlanta, GA,
  USA.
  \textit{Email:}~sebastian.pokutta@isye.gatech.edu}
\author[4]{David Steurer}
\affil[4]{Department of Computer Science,
Cornell University,
Ithaca, NY 14853,
United States.
\textit{Email:}~dsteurer@cs.princeton.edu}
\begin{document}

\maketitle

\begin{abstract}
We develop a framework for proving approximation limits 
of polynomial-size linear programs from lower bounds on
the nonnegative ranks of suitably defined matrices. This 
framework yields unconditional impossibility results that are applicable to 
\emph{any} linear program as opposed to only programs generated by
hierarchies. Using our framework, we prove that 
$O(n^{1/2-\epsilon})$-approximations for CLIQUE 
require linear programs of size~$\cramped{2^{\cramped{n^{\Omega(\epsilon)}}}}$.
This lower bound applies to linear programs using a certain encoding of
CLIQUE as a linear optimization problem. Moreover, we establish a similar 
result for approximations of semidefinite programs by linear programs.  

Our main technical ingredient is a quantitative improvement of Razborov's 
rectangle corruption lemma (1992) for the high error regime, which gives 
strong lower bounds on the nonnegative rank of shifts of the unique 
disjointness matrix.
\end{abstract}

%\redTodo{This section is a mixture of introduction and preliminaries.  It
%  would be better to make this introduction-only, i.e., define only
%  the notions necessary to understand the introduction, and move
%  technical details, precise definitions, corner cases, useful tricks
%  to the framework section, e.g., to the preliminaries part.}

\section{Introduction}

%\redTodo{Reorganization in progress (I have removed all definitions
%for now, to see only the "substance").}

%\subsection{Context}
  
Linear programs (LPs) play a central role in the design of 
approximation algorithms, see, e.g., \citep{VaziraniBook01,%
WilliamsonShmoysBook11,LauRaviSinghBook11}. Therefore, 
understanding the limitations of LPs as tools for designing
approximation algorithms is an important question. 

The first generation of results studied the limitations of
\emph{specific} LPs by seeking to determine their integrality 
gaps. The second generation of results, pioneered by 
\cite{AroraBollobasLovasz02}, studied the limitations of 
\emph{structured} LPs such as those generated by 
\emph{lift-and-project procedures} or \emph{hierarchies} 
(e.g., \cite{SheraliAdams1990} and \cite{LovaszSchrijver1991}).
%See the \hyperref[sec:previous-work]{previous work section} below
%for a more detailed account of the relevant literature.

In this work, we start a third generation of results that 
apply to \emph{any} LP for a given problem. For example, 
our lower bounds address the following question: Is there a 
polynomial-size linear programming relaxation $\mathsf{LP}_n$ 
for CLIQUE that achieves a $n^{\Theta(1)}$-approximation for
all graphs with at most $n$ vertices? We develop a framework
for reducing questions of this kind to lower bounds on the 
nonnegative rank\footnote{The \emph{nonnegative rank} of 
a matrix $M$, denoted $\nnegrk(M)$, is the minimum $r$ such 
that $M = TU$ where $T$ and $U$ are nonnegative matrices with 
$r$ columns and $r$ rows, respectively.} of certain matrices 
associated to the problem, and then prove lower bounds for 
the matrices corresponding to CLIQUE.

The matrices studied here are related to the unique disjointness
problem, a variant of the famous disjointness problem from 
communication complexity (see, e.g., \cite{ChattopadhyayPitassi10}
for a survey). In the \emph{disjointness problem} (DISJ), both Alice 
and Bob receive  a subset of~$[n] := \{1,\ldots,n\}$. They have to 
determine whether the two subsets are disjoint. The \emph{unique 
disjointness problem} (UDISJ) is the promise version of the 
disjointness problem where the two subsets are guaranteed to have 
at most one element in common. Denoting the binary encoding of the 
sets of Alice and Bob by $a, b \in \{0,1\}^n$, respectively, this 
amounts to computing the Boolean function
\(
\text{UDISJ}(a,b) \coloneqq 1-a^{\intercal}b
\) 
on the set of pairs $(a,b) \in \{0,1\}^n \times \{0,1\}^n$ with 
$a^{\intercal} b \in \{0,1\}$. Viewing it as a partial $2^n \times 2^n$ 
matrix, we call $\text{UDISJ}$ the \emph{unique disjointness matrix}.

It is known that the communication complexity of UDISJ is $\Omega(n)$
bits for deterministic, nondeterministic and even randomized
communication protocols%
~\citep{KalyanasundaramSchnitger92,Razborov92,BarYossef2004702}.
One consequence of this is that the nonnegative rank of \emph{any} 
matrix obtained from UDISJ by filling arbitrarily the blank entries
(for pairs $(a,b)$ with $a^{\intercal} b > 1$) and perhaps adding 
rows and/or columns is still $2^{\Omega(n)}$. Indeed: (i) the support of the 
resulting matrix has $\Omega(n)$ nondeterministic communication complexity
because it contains UDISJ, (ii) for every matrix $M$, $\log \nnegrk(M)$ is 
lower bounded by the nondeterministic communication complexity of (the 
support matrix of) $M$~\citep{Yannakakis91}.

In a recent paper \cite{extform4} proved strong lower bounds on the
size of LPs expressing the traveling salesman problem (TSP), or more
precisely on the size of extended formulations of the TSP polytope
(see Section~\ref{sec:framework} for definitions of concepts 
related to polyhedra, extended formulations and slack matrices). 
Their proof works by embedding UDISJ in a slack matrix of the TSP 
polytope of the complete graph on $\Theta(n^2)$ vertices. This solved 
a question left open in \cite{Yannakakis91}. We use a similar approach 
for approximate extended formulations. In case of CLIQUE, our approach 
requires lower bounds on the nonnegative rank of partial matrices 
obtained from the UDISJ matrix by adding a positive offset to all 
the entries.

\subsection{Related Work}

Our results are closely related to previous work in communication
complexity for the (unique) disjointness problem and related problems. 
Lower bounds of \(\Omega(n)\) on the randomized, bounded error 
communication complexity of disjointness were established in
\cite{KalyanasundaramSchnitger92}. In \cite{Razborov92} the 
distributional complexity of unique disjointness problem was 
analyzed, which in particular implies the result of 
\cite{KalyanasundaramSchnitger92}. In that famous paper, 
Razborov proved the following \emph{rectangle corruption lemma}: 
for every large rectangle within UDISJ, the number of \(0\)-entries 
is proportional to the number of \(1\)-entries.
 
The most recent proof that the 
randomized, bounded error communication complexity of DISJ
is $\Omega(n)$ is due to \cite{BarYossef2004702} and is 
based on information theoretic arguments. This leads to a lower bound for
randomized communication within a high-error regime, that is, when the
error probability is close to $1/2$. Here we derive a strong generalization
dealing with shifts for approximate EFs and we recover the
high-error regime bound.

There has been extensive work on LP and SDP hierarchies/relaxations 
and their limitations; we will be only able to list a few here. In
\cite{charikar2009integrality}, strong lower bounds (of \(2-\epsilon\)) 
on the integrality gap for \(n^{\epsilon}\) rounds of the  Sherali-Adams 
hierarchy when applied to (natural relaxations of) VERTEX COVER, Max~CUT, 
SPARSEST CUT have been been established via embeddings into \(\ell_2\); see 
also \cite{Charikar2010} for limits and tradeoffs in metric embeddings. For
integrality gaps of relaxations for the KNAPSACK problem see \cite{Karlin2011}. 
A nice overview of the differences and similarities of the Sherali-Adams, 
the Lovász-Schrijver and the Lasserre hierarchies/relaxations can be found 
in \cite{laurent2003comparison}. 

Similar to the level of a hierarchy, we have the notion of \emph{rank} 
for the Lovász-Schrijver relaxation and rank correspond to a similar 
complexity measure as the level. 
% While the level of a hierarchy typically relates to the
% degree of considered monomials, the Lovász-Schrijver relaxation is a
% particular operation that can be iterated however. 
The rank is the minimum number of application of the Lovász-Schrijver 
operator \(N\) until we obtain the integral hull of the polytope under
consideration. Rank lower bounds of \(n\) for Lovász-Schrijver relaxations 
of CLIQUE have been obtained in \cite{CD}; a similar result for Sherali-Adams 
hierarchy can be found in \cite{laurent2003comparison}. 

In \cite{singh2010improving} integrality gaps, after adding few 
rounds of Chvátal-Gomory cuts, have been studied for problems including
\(k\)-CSP, Max~CUT, VERTEX COVER, and UNIQUE LABEL COVER showing that in
some cases (e.g., \(k\)-CSP) the gap can be significantly reduced whereas 
in most other cases the gap remains high. 

In the context of SDP relaxations, in 
particular formulations derived from the Lovász-Schrijver \(N_+\) hierarchies (see
\cite{LovaszSchrijver1991}) and the
Lasserre hierarchies (see \cite{Lasserre02}) there has been
significant work in recent years. For example, 
\cite{arora2009expander} obtained a \(O(\sqrt{\log n})\) upper bound 
on a suitable SDP relaxation of SPARSEST CUT. For lower
bounds in terms of rank, see e.g., \cite{schoenebeck2008linear} for
the \(k\)-CSP in the Lasserre hierarchy or
\cite{schoenebeck2007linear} for VERTEX COVER in the semidefinite
Lovász-Schrijver hierarchy.
Motivated by the Unique Games Conjecture, several works studied upper and
lower bounds for SDP hierarchy relaxations of Unique Games (see for example,
\cite{GuruswamiS11b,barak2011rounding,BarakGHMRS12,BarakBHKSZ12}).
%In \cite{extform4} a characterization of semidefinite EFs
%via one-way quantum communication complexity is established.

Approximate extended formulations have been studied before, for 
specific problems, e.g., KNAPSACK in \cite{bienstock2008approximate},
or as a general tool, see \cite{vyve2006approximate}.

For recent results on computing the nonnegative rank see, e.g., 
\cite{arora2012nnr}.

\subsection{Contribution}

The contribution of the present paper is threefold.

\begin{enumerate}
\item We develop a framework for proving lower bounds on
  the sizes of approximate EFs. Through a generalization of
  Yannakakis's factorization theorem, we characterize the minimum size
  of a $\rho$-approximate extended formulations as the nonnegative rank 
  of any slack matrix of a \emph{pair} of nested polyhedra.
  Thus we reduce the task of proving
  approximation limits for LPs to the task of obtaining lower bounds
  on the nonnegative ranks of associated matrices. Typically,
  these matrices have no zeros, which renders it impossible to use
  nondeterministic communication complexity. We emphasize the fact
  that the results obtained within our framework are unconditional. In
  particular, they do not rely on P $\neq$ NP.

\item We extend Razborov's rectangle corruption lemma to deal 
 with shifts of the UDISJ matrix. As a consequence, we prove 
 that the nonnegative rank of any matrix obtained from the UDISJ
 matrix by adding a constant offset to every entry is still 
 $2^{\Omega(n)}$. Moreover, we show that the nonnegative rank is 
 still $2^{\Omega(n^{2\epsilon})}$ when the offset is at
 most $n^{1/2-\epsilon}$.
 To our knowledge, these are
 the first strong lower bounds on the nonnegative rank of
 matrices that contain no zeros. %(Furthermore, the relative difference
 %between any two entries of some of our shifted UDISJ matrices is tiny.) 
 Our extension of Razborov's lemma allow us to recover known lower
 bounds for DISJ in the high-error regime of \cite{BarYossef2004702}.

\item We obtain a strong hardness result for CLIQUE w.r.t.\ 
   a natural linear encoding of the problem. From the results
  described above, we prove that the size of every
  $O(n^{1/2-\epsilon})$-approximate EF for CLIQUE is
  $2^{\Omega(n^{2\epsilon})}$. %We see this as the first step in
  %obtaining lower bounds on the sizes of approximate EFs for (faithful
  %linear encodings of) other problems. -> not sure what is meant here!
  Finally, we observe that the
  same bounds hold for approximations of SDPs by LPs. This suggests
  that SDP-based approximation algorithms can be significantly stronger than
  LP-based approximation algorithms. The inapproximability of SDPs by
  LPs has some interesting consequences. In particular we cannot
  expect to convert SDP-based approximation algorithms into LP-based
  ones by approximating the PSD-cone via linear programming. 
\end{enumerate}

We point out that our framework readily generalizes to SDPs
by replacing nonnegative rank with PSD rank (see
\cite{GouveiaParriloThomas2011} for a definition of
the PSD rank). However, no strong bound on PSD rank 
seems to be currently in sight.

Finally, we report that the results of this paper have inspired 
further research. 
\begin{itemize}
\item \cite{BM13} improved our lower bound on the nonnegative rank of shifted 
UDISJ matrices and obtain super-polynomial lower bounds for shifts up to 
$O(n^{1-\epsilon})$,
hence matching the algorithmic hardness of approximation
for CLIQUE. This was achieved by pioneering information-theoretic methods for 
proving lower bounds on the nonnegative rank. An alternative information 
theoretic approach for lower bounding the nonnegative rank which simplifies 
and slightly improves the results in \cite{BM13} has been presented in
\cite{BP2013commInfo}. 
This last paper also establishes that matrices obtained 
from shifts of UDISJ by removing rows and columns, or flipping entries,
still have high nonnegative rank.

\item \cite{CLRS13} obtain lower bounds on the size of LPs approximating 
Max~CSP. In particular, they prove that approximating Max~CUT (with
nonnegative weights) with a constant factor less than $2$ requires
$n^{\Omega(\log n / \log \log n)}$. This solves a conjecture we
stated in an earlier version of this text.

\item \cite{Rothvoss14} proved a $2^{\Omega(n)}$ lower bound on the 
nonnegative rank of the slack matrix of the perfect matching polytope
by a significant modification of Razborov's lemma. This exciting result
essentially proves that there are is no small LP that can solve all 
weighted instance of the matching problem on a $n$-vertex complete graph.
\end{itemize}

\subsection{Outline}

We begin in Section \ref{sec:framework} by setting up our framework for 
studying approximate extended formulations of combinatorial optimization 
problems. Then we extend Razborov's rectangle corruption lemma in Section 
\ref{sec:push-razb-result} and use this to prove strong lower bounds on 
the nonnegative rank of shifts of the UDISJ matrix. Finally, we draw 
consequences for CLIQUE and approximations of SDPs by LPs in 
Section~\ref{sec:consequences}.

\section{Framework for Approximation Limits of LPs}
\label{sec:framework}

In this section we establish our framework for studying
approximation limits of LPs. First, we define in details the
concepts of linear encodings and approximate extended formulations.
Second, we prove a factorization theorem for pairs of
nested polyhedra reducing existential questions on approximate
extended formulations to the computation of nonnegative ranks of 
corresponding slack matrices.

\subsection{Preliminaries}
\label{sec:preliminaries}

A (convex) \emph{polyhedron} is a set $P \subseteq \RR^d$ that 
is the intersection of a finite collection of closed halfspaces.
In other words, $P$ is a polyhedron if and only if $P$ is the set of 
solutions of a finite system of linear inequalities and possibly equalities.
(Note that every equality can be represented by a pair of inequalities.)
Equivalently, a set $P \subseteq \RR^d$ is a polyhedron if and only
if $P$ is the Minkowski sum of the convex hull $\conv{V}$ of a finite 
set $V$ of points and the conical hull $\cone{R}$ of a finite set $R$
of vectors, that is, $P = \conv{V} + \cone{R}$.

Let $P \subseteq \RR^d$ be a polyhedron. The \emph{dimension} of \(P\) 
is the dimension of its affine hull \(\operatorname{aff}(P)\). A \emph{face} 
of~$P$ is a subset $F \coloneqq \{x \in P \mid w^{\intercal} x = \delta\}$ 
such that \(P\) satisfies the inequality $w^{\intercal} x \leqslant 
\delta$. Note that face $F$ is again a polyhedron.
%A face is called \emph{proper} if it is not the polyhedron itself.
A \emph{vertex} is a face of dimension $0$, i.e., a point.
A \emph{facet} is a face of dimension one less than \(P\).
The inequality $w^{\intercal} x \leqslant \delta$ is called 
\emph{facet-defining} if the face $F$ it defines is a facet.
The \emph{recession cone} $\rec{P}$ of $P$ is the set of directions $v \in \RR^d$ such that,
for a point $p$ in $P$, all points $p + \lambda v$ where $\lambda \geqslant 0$
belong to $P$. The recession cone of $P$ does not depend on the base point $p$, 
and is again a polyhedron (even more, it is a polyhedral cone). The elements
of the recession cone are sometimes called \emph{rays}.

A \emph{(convex) polytope} $P \subseteq \RR^d$ is a bounded polyhedron. 
Equivalently, $P$ is a polytope if and only if $P$ is the convex hull 
$\conv{V}$ of a finite set $V$ of points. 
Let $P \subseteq \RR^d $ be a polytope. 
Every (finite or infinite) set $V$ such that $P = \conv{V}$
contains all the vertices of $P$.
Letting $\vertexset(P)$ denote the vertex set of $P$,
then we have $P = \conv{\vertexset(P)}$. 
Every (finite) system describing $P$ contains
all the facet-defining inequalities of $P$,
up to scaling by positive numbers and
adding equalities satisfied by \emph{all} points of $P$.
Conversely, a linear description of $P$ can be obtained by
picking one defining inequality per facet and
adding a system of equalities describing 
\(\operatorname{aff}(P)\). A \emph{$0/1$-polytope} in 
$\RR^d$ is simply the convex hull of a subset of $\{0,1\}^d$.

For more about convex polytopes and polyhedra,
see the standard reference~\cite{Ziegler}.

\subsection{Linear Encodings of Problems and Approximate EFs}

A \emph{linear encoding} of a (combinatorial optimization) problem 
is a pair $(\mathcal{L},\mathcal{O})$ where $\mathcal{L} \subseteq
\{0,1\}^*$ is the set of \emph{feasible solutions} to the problem and
$\mathcal{O} \subseteq \RR^*$ is the set of \emph{admissible 
objective functions}. An \emph{instance} of the linear encoding is 
a pair $(d,w)$ where $d$ is a positive integer and $w \in \mathcal{O} 
\cap \RR^d$. Solving the instance $(d,w)$ means finding $x \in 
\mathcal{L} \cap \{0,1\}^d$ such that $w^{\intercal} x$ is either 
maximum or minimum, according to the type of problem under consideration.

\begin{ex}[Linear encoding of metric TSP]
\label{ex:metric_TSP}
In the natural linear encoding of the metric traveling salesman 
problem (metric TSP), the feasible solutions $x \in \mathcal{L}$ are 
the characteristic vectors (or incidence vectors) of tours of the 
complete graph over $[n]$ for some $n \geqslant 3$, and the admissible 
objective functions $w \in \mathcal{O}$ are all nonnegative vectors 
$w = (w_{ij})$ such that $w_{ik} \leqslant w_{ij} + w_{jk}$ for all 
distinct $i$, $j$ and $k$ in $[n]$. All vectors are encoded in $\RR^d$, 
where $d = \binom{n}{2}$. By considering all possible $n \geqslant 3$, we 
obtain the pair $(\mathcal{L},\mathcal{O})$ corresponding to metric TSP. 
(Recall that metric TSP is a minimization problem.)
\end{ex}

For every fixed dimension $d$, a linear encoding $(\mathcal{L},\mathcal{O})$ naturally defines a pair of nested convex sets $P \subseteq Q$ where
\begin{align}
P &\coloneqq \conv{\{x \in \{0,1\}^d \mid x \in\mathcal{L}\}}, \quad \text{and}\\
Q &\coloneqq \{x \in \RR^d \mid \forall w \in \mathcal{O} \cap \RR^d :
w^{\intercal} x \leqslant \max \{w^{\intercal} z \mid z \in P\}\}
\end{align}
if the goal is to maximize and
\(
Q \coloneqq \{x \in \RR^d \mid \forall w \in \mathcal{O} \cap \RR^d :
w^{\intercal} x \geqslant \min \{w^{\intercal} z \mid z \in P\}\}
\) 
if the goal is to minimize. Intuitively, the vertices of $P$ encode the feasible solutions 
of the problem under consideration and the defining inequalities of 
$Q$ encode the admissible objective functions.  Notice that $P$
is always a 0/1-polytope but $Q$ might be unbounded and, in some
pathological cases, nonpolyhedral. Below, we will mostly consider 
the case where $Q$ is polyhedral, that is, defined by a finite number
of ``interesting'' inequalities.

Given a linear encoding $(\mathcal{L},\mathcal{O})$ 
of a maximization problem, and $\rho \geqslant 1$,
a \emph{$\rho$-approximate
extended formulation} (EF) is an extended formulation 
$Ex + Fy = g$, $y \geqslant \mathbf{0}$ with $(x,y) \in \RR^{d+r}$
such that 
\begin{align}
\max \{w^{\intercal} x \mid Ex + Fy = g,\ y \geqslant \mathbf{0}\}
&\geqslant \max \{w^{\intercal} x \mid x \in P\} \quad \text{for all} \quad w \in \RR^d \quad \text{and}\\
\max \{w^{\intercal} x \mid Ex + Fy = g,\ y \geqslant \mathbf{0}\}
&\leqslant \rho \max \{w^{\intercal} x \mid x \in P\} \quad \text{for all} \quad w \in \mathcal{O} \cap \RR^d. 
\end{align}
Letting 
$K \coloneqq \{x \in \RR^d \mid \exists y \in \RR^r:
Ex + Fy = g,\ y \geqslant \mathbf{0}\}$, we 
see that this is equivalent to $P \subseteq K \subseteq \rho Q$. For a 
minimization problem, we require 
\begin{align}
\min \{w^{\intercal} x \mid Ex + Fy = g,\ y \geqslant \mathbf{0}\}
&\leqslant \min \{w^{\intercal} x \mid x \in P\}
\quad \text{for all} \quad w \in \RR^d \quad \text{and}\\ 
\min \{w^{\intercal} x \mid Ex + Fy = g,\ y \geqslant \mathbf{0}\}
&\geqslant \rho^{-1} \min \{w^{\intercal} x \mid x \in P\}
\quad \text{for all} \quad w \in \mathcal{O} \cap \RR^d. 
\end{align}
This is equivalent to $P \subseteq K \subseteq \rho^{-1} Q$.

\begin{ex}[Approximate extended formulation of metric TSP]
We return to Example \ref{ex:metric_TSP}. It is known that the 
Held-Karp relaxation \(K\) of the metric TSP has integrality gap 
at most $3/2$ (see \cite{heldKarp70}, \cite{Wolsey80}). In geometric 
terms, this means that $P \subseteq K \subseteq 2/3 \cdot Q$.
Although $K$ is defined by an exponential number of inequalities,
it is known that it can be reformulated with a polynomial number of 
constraints by adding a polynomial number of variables, see, e.g., 
\cite{Carr_et_al09}. That is, the Held-Karp relaxation $K$ has a 
polynomial-size extended formulation. Thus, the pair $(\mathcal{L},
\mathcal{O})$ for the metric TSP has a polynomial-size $3/2$-approximate
EF.
\end{ex}

We require the following \emph{faithfulness condition}: every instance 
of the problem can be mapped to an instance of the linear encoding 
in such a way that feasible solutions to an instance of the problem 
can be converted in polynomial time to feasible solutions to the 
corresponding instance of the linear encoding without deteriorating 
their objective function values, and vice-versa. Roughly speaking,
we ask that each instance of the problem can be encoded as an 
instance of the linear encoding.

For linear encoding of graph problems, such as the maximum clique 
problem (CLIQUE), the set of feasible solutions is not allowed to 
depend on the input graph, which therefore must be encoded solely in 
the objective function. The set of feasible solutions is only allowed 
to depend on the size $n$ of the ground set.

\begin{ex}[Max~$k$-SAT]
Consider the maximum $k$-SAT problem 
(Max~$k$-SAT), where $k$ is constant. Letting $u_1$, \ldots, 
$u_n$ denote the variables of a Max $k$-SAT instance, we encode 
the problem in dimension $d = \Theta(n^k)$. For each nonempty 
clause $C$ of size at most $k$, we introduce a variable $x_C$.
Collectively, these variables define a point $x \in \RR^d$. Given 
a truth assignment, we set $x_C$ to $1$ if $C$ is satisfied and 
otherwise we set $x_C$ to $0$. Letting $n$ vary, this defines a 
language $\mathcal{L} \subseteq \{0,1\}^*$. We let 
$\mathcal{O} \coloneqq \{0,1\}^*$.

The pair $(\mathcal{L},\mathcal{O})$ defines a linear encoding 
of Max~$k$-SAT because each instance of Max~$k$-SAT can be
encoded as an instance of $(\mathcal{L},\mathcal{O})$. More
precisely, to any given set of clauses over $n$ variables, we 
can associate a dimension $d = \Theta(n^k)$ and weight vector
$w \in \{0,1\}^d$ such that maximizing $\sum w_C x_C$ for 
$x \in \mathcal{L} \cap \{0,1\}^d$ corresponds to finding a 
truth assignment that maximizes the number of satisfied clauses.

Finally, we remark that the EF defined by the inequalities 
$0 \leqslant x_C \leqslant 1$ and $x_C \leqslant \sum_{u_i \in C} 
x_{\{u_i\}} + \sum_{\bar{u}_i \in C} (1-x_{\{u_i\}})$ for all 
clauses $C$ is a polynomial-size $4/3$-approximate EF for 
Max~$k$-SAT, as follows from \cite{Goemans94anew}. 
\end{ex}

\subsection{Factorization Theorem for Pairs of Nested Polyhedra}

Let $P$ and $Q$ be polyhedra with $P \subseteq Q
\subseteq \RR^d$. An \emph{extended formulation} (EF) \emph{of 
the pair} $P,Q$ is a system $Ex + Fy = g$, $y \geqslant
\mathbf{0}$ defining a polyhedron $K \coloneqq \{x \in \RR^d \mid
Ex + Fy = g,\ y \geqslant \mathbf{0}\}$ such that 
$P \subseteq K \subseteq Q$. We denote by $\xc(P,Q)$ 
the minimum size of an EF of the pair $P,Q$. 

Now consider an inner description
\(P \coloneqq \conv{\{v_{1}, \dotsc, v_{n}\}}
+ \cone{\{r_{1}, \dots, r_{k}\}}\) of $P$
and an outer description
$Q \coloneqq \{x \in \RR^d \mid A x \leqslant b\}$ of $Q$,
where the system \(Ax \leqslant b\) consists of $m$ inequalities:
\(A_{1} x \leqslant b_{1}, \dotsc, A_{m} x \leqslant b_{m}\).  The
\emph{slack matrix of the pair} $P,Q$ w.r.t.\ these inner and outer
descriptions is the $m \times (n + k)$ matrix
\(S^{P,Q} =
\left[
\begin{smallmatrix}
  S^{P,Q}_{\mathrm{vertex}} & S^{P,Q}_{\mathrm{ray}}
\end{smallmatrix}
\right]
\)
given by block decomposition into a vertex and ray part:
\begin{align*}
 S^{P,Q}_{\mathrm{vertex}}(i,j) &\coloneqq b_{i} - A_{i} v_{j},
 & i \in [m],\ j \in [n], \\
 S^{P,Q}_{\mathrm{ray}}(i,j) &\coloneqq - A_{i} r_{j},
 & i \in [m],\ j \in [k].
\end{align*}

A \emph{rank-$r$ nonnegative factorization} of an $m \times n$ 
matrix $M$ is a decomposition of $M$ as a product $M = TU$
of nonnegative matrices $T$ and $U$ of sizes $m \times r$ and 
$r \times n$, respectively. The \emph{nonnegative rank} 
$\nnegrk(M)$ of $M$ is the minimum rank $r$ of nonnegative 
factorizations of $M$. In case $M$ is zero, we let $\nnegrk(M) = 0$.
It is quite useful to notice that the 
nonnegative rank of $M$ is also the minimum number of 
nonnegative rank-$1$ matrices whose sum is $M$. From this, 
we see immediately that the nonnegative rank of $M$ is at
least the nonnegative rank of any of its submatrices.

Our first result gives an essentially exact characterization of $\xc(P,Q)$ in 
terms of the nonnegative rank of the slack matrix of the pair $P,Q$.
It states that the minimum extension complexity $\xc(P,Q)$ of a polyhedron 
sandwiched between $P$ and $Q$ equals the nonnegative rank of $S^{P,Q}$
(minus $1$, in some cases). The result readily generalizes Yannakakis's 
factorization theorem~\citep{Yannakakis91}, which concerns the case $P = Q$. 
The idea of considering a pair \(P,Q\) as we do here first appeared 
in \cite{Pashkovich12} and similar ideas appeared earlier in 
\cite{GillisGlineur10}. 

\begin{thm}
\label{thm:sandwich} 
With the above notations, we have
\(\nnegrk(S^{P,Q}) - 1 \leqslant \xc(P, Q) \leqslant \nnegrk(S^{P,Q})\)
for every slack matrix of the pair \(P, Q\). If the affine hull of $P$
is not contained in $Q$ and $\rec{Q}$ is not full-dimensional, we have 
\(\xc(P, Q) = \nnegrk(S^{P,Q})\). In particular, this holds when $P$ 
and $Q$ are polytopes of dimension at least $1$.
\begin{proof}
  First, we deal with degenerate cases. Observe that $\xc(P,Q) = 0$ if and only if
there exists an affine subspace containing $P$ and contained in $Q$, that is, if and
only if the affine hull of $P$ is contained in $Q$. In this case, we have $\nnegrk(S^{P,Q})
\in \{0,1\}$, so the theorem holds. 

  Now assume that the affine hull of $P$ is not contained in $Q$. Then, $\nnegrk(S^{P,Q})
\geqslant 1$ because having $\nnegrk(S^{P,Q}) = 0$ means either that $S^{P,Q}$ is empty,
that is, $m = 0$ or $n+k = 0$, or that $S^{P,Q}$ is the zero matrix. In all cases, this 
contradicts our assumption that the affine hull of $P$ is not contained in $Q$. 

  Next, let $S^{P,Q} = TU$ be any rank-$r$ nonnegative factorization 
of $S^{P,Q}$ with $r = \nnegrk (S^{P,Q}) \geqslant 1$. This factorization
decomposes into blocks:
\(S^{P,Q}_{\mathrm{vertex}} = T U_{\mathrm{vertex}}\)
and \(S^{P,Q}_{\mathrm{ray}} = T U_{\mathrm{ray}}\).
Consider the system
\begin{equation}
\label{eq:sandwich_system}
Ax + Ty = b,\ y \geqslant \mathbf{0}
\end{equation}
and the corresponding polyhedron \(K \coloneqq
\{x \in \RR^d \mid Ax + Ty = b,\ y \geqslant \mathbf{0}\}\).

We verify now that 
\(P \subseteq K \subseteq Q\). The inclusion $K \subseteq Q$ simply 
follows from $Ty \geqslant \mathbf{0}$. For the inclusion 
$P \subseteq K$, pick a vertex \(v_j\) of \(P\)
and observe that $(x,y) =
(v_j,U_{\mathrm{vertex}}^j)$ satisfies \eqref{eq:sandwich_system}, where
$U_{\mathrm{vertex}}^j$ denotes
the $j$th column of $U_{\mathrm{vertex}}$,
because $Av_j + TU_{\mathrm{vertex}}^{j} =
Av_j + b - Av_j = b$ and $U^j \geqslant \mathbf{0}$.
Similarly, for every ray \(r_{j}\) we obtain a ray
\((r_{j}, U_{\mathrm{ray}}^{j})\) of \(K\) as
\(A r_{j} + T U_{\mathrm{ray}}^{j} = 0\) and
\(U_{\mathrm{ray}}^{j} \geqslant \mathbf{0}\).

Thus we obtain that 
\eqref{eq:sandwich_system} is a size-$r$ EF of the pair $P,Q$. 
Therefore, $\xc(P,Q) \leqslant \nnegrk(S^{P,Q})$.

Finally, suppose that the system
\begin{equation}
\label{eq:sandwich_system2}
Ex + Fy = g,\ y \geqslant \mathbf{0}
\end{equation}
defines a size-$r$ EF of the pair $P, Q$. Let $L \subseteq \RR^{d+r}$ 
denote the polyhedron defined by \eqref{eq:sandwich_system2}, and 
let $K \subseteq \RR^d$ denote the orthogonal projection of $L$ into 
$x$-space. 

Since $P \subseteq K$, for each point $v_j$, there
exists $w_{j} \in \RR_{+}^{r}$ such that $(v_{j}, w_{j}) \in L$.
Similarly, for each ray \(r_{j}\) there exists a \(z_{j} \in \RR_{+}^{r}\)
with \((r_{j}, z_{j})\) a ray of \(L\).
Let \(W\) be the matrix with columns \(w_{j}\),
and \(Z\) be the matrix with columns \(z_{j}\).

Since $K \subseteq Q$,
by Farkas's lemma,
$Ax \leqslant b$ can be derived from \eqref{eq:sandwich_system2},
i.e., there exists a matrix \(T\) and a vector
\(c \geqslant \mathbf{0}\) with \(A = T E\), \(b = T g + c\)
and \(T F \geqslant 0\).
This gives the factorizations
% b - A v_j = T g + c - T E v_j = T (g -  E v_j) + c = T F w_j + c
\(S^{P,Q}_{\mathrm{vertex}} = (T F) W + c \mathbf{1}^\intercal\)
% - A r_j = - T E r_j = T F z_j
and \(S^{P, Q}_{\mathrm{ray}} = (T F) Z\),
resulting in the rank-$(r+1)$ nonnegative factorization
\(S^{P, Q} =
\left[
  \begin{smallmatrix}
    T F & c
  \end{smallmatrix}
\right]
\cdot
\left[
  \begin{smallmatrix}
    W & Z \\
    \mathbf{1}^\intercal & \mathbf{0}^\intercal
  \end{smallmatrix}
\right].
\)
Taking $r = \xc(P,Q)$, we find \(\nnegrk(S^{P,Q}) \leqslant \xc(P,Q) + 1\).

Finally, when \(\rec{Q}\) is not full-dimensional, then \(c\) above can be chosen 
to be \(\mathbf{0}\). This simplifies the factorization, and yields the sharper 
inequality \(\nnegrk(S^{P,Q}) \leqslant \xc(P,Q)\).
\end{proof}
\end{thm}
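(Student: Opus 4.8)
The plan is to prove the chain $\nnegrk(S^{P,Q}) - 1 \leqslant \xc(P,Q) \leqslant \nnegrk(S^{P,Q})$ by giving explicit two-way conversions between nonnegative factorizations of the slack matrix and extended formulations of the pair, and then to recover the sharper equality by a closer look at the lossy direction. I would first dispose of the degenerate case: $\xc(P,Q) = 0$ holds exactly when some affine subspace is sandwiched between $P$ and $Q$, i.e.\ when $\operatorname{aff}(P) \subseteq Q$, and one checks that in this case $S^{P,Q}$ is either empty or has nonnegative rank at most $1$, so the statement holds trivially. From now on I assume $\operatorname{aff}(P) \not\subseteq Q$; this also forces $\nnegrk(S^{P,Q}) \geqslant 1$, since a zero or empty slack matrix would again describe an affine subspace between $P$ and $Q$.

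For the easy inequality $\xc(P,Q) \leqslant \nnegrk(S^{P,Q})$, I take a nonnegative factorization $S^{P,Q} = TU$ of rank $r = \nnegrk(S^{P,Q})$, split $U = [\,U_{\mathrm{vertex}} \mid U_{\mathrm{ray}}\,]$ along the vertex/ray blocks, and use the columns of $T$ as the coefficients of $r$ fresh nonnegative variables $y$, defining $K \coloneqq \{x \in \RR^d \mid \exists\, y \geqslant \mathbf{0},\ Ax + Ty = b\}$. Then $Ty \geqslant \mathbf{0}$ yields $Ax \leqslant b$, so $K \subseteq Q$; and reading the block identities $S^{P,Q}_{\mathrm{vertex}} = T U_{\mathrm{vertex}}$ and $S^{P,Q}_{\mathrm{ray}} = T U_{\mathrm{ray}}$ backwards shows that $(v_j, U^j_{\mathrm{vertex}})$ solves the system and $(r_j, U^j_{\mathrm{ray}})$ is one of its rays, so every vertex and ray of $P$, hence $P$ itself, lies in $K$. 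This exhibits a size-$r$ EF of the pair; the verification is routine bookkeeping.

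The reverse inequality $\nnegrk(S^{P,Q}) \leqslant \xc(P,Q) + 1$ carries the weight of the argument. Starting from a size-$r$ EF $Ex + Fy = g$, $y \geqslant \mathbf{0}$ of the pair with feasible region $L$ projecting onto $K$, so $P \subseteq K \subseteq Q$, I use $P \subseteq K$ to lift each vertex $v_j$ to some $(v_j, w_j) \in L$ and each ray $r_j$ to a ray $(r_j, z_j)$ of $L$, gathering the $w_j$'s and $z_j$'s into nonnegative matrices $W$ and $Z$; and I use $K \subseteq Q$ together with Farkas's lemma to derive the whole system $Ax \leqslant b$ from the EF, obtaining a nonnegative matrix $T$ and a vector $c \geqslant \mathbf{0}$ with $A = TE$, $b = Tg + c$, and $TF \geqslant \mathbf{0}$. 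Substituting $g = Ev_j + Fw_j$ into $S^{P,Q}_{\mathrm{vertex}}(i,j) = b_i - A_i v_j$ collapses the vertex block to $TF\,W + c\,\mathbf{1}^\intercal$, and similarly the ray block to $TF\,Z$, assembling into the nonnegative factorization $S^{P,Q} = \left[\begin{smallmatrix} TF & c \end{smallmatrix}\right] \cdot \left[\begin{smallmatrix} W & Z \\ \mathbf{1}^\intercal & \mathbf{0}^\intercal \end{smallmatrix}\right]$ of inner dimension $r+1$. I expect the main obstacle to be exactly this extra column $c$ with its homogenizing row $[\,\mathbf{1}^\intercal \ \mathbf{0}^\intercal\,]$: the entries of $c$ measure how far $K$ stays from the facets of $Q$, and in the pair setting those facets need not be touched by $P$, so $c$ cannot be suppressed in general — this is the source of the $+1$.

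To upgrade the bounds to the equality $\xc(P,Q) = \nnegrk(S^{P,Q})$ when additionally $\rec{Q}$ is not full-dimensional, I would show that the Farkas certificate above can then be chosen with $c = \mathbf{0}$ — intuitively, a recession cone of $Q$ confined to a hyperplane leaves no room to spare, so each inequality $A_i x \leqslant b_i$ is derivable from the EF with no constant term — whereupon the extra column and row vanish and the factorization has inner dimension $r$, giving $\nnegrk(S^{P,Q}) \leqslant \xc(P,Q)$. The case of two polytopes of dimension at least $1$ is then immediate: $\rec{Q} = \{\mathbf{0}\}$ is not full-dimensional, and $\operatorname{aff}(P)$, being unbounded, cannot lie inside the bounded set $Q$; taking $P = Q$ recovers Yannakakis's factorization theorem.
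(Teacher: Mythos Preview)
Your proposal is correct and follows essentially the same approach as the paper's proof: the same degenerate-case analysis, the same EF $Ax + Ty = b$, $y \geqslant \mathbf{0}$ built from a factorization, the same Farkas-based extraction of $T$ and $c$ from an EF, and the same rank-$(r+1)$ block factorization with the extra column $c$ collapsing to $\mathbf{0}$ when $\rec{Q}$ is not full-dimensional. One small slip: in the Farkas step the multiplier matrix $T$ need not be nonnegative (only $TF \geqslant \mathbf{0}$ and $c \geqslant \mathbf{0}$ are required), but since your argument only uses $TF \geqslant \mathbf{0}$ this does not affect correctness.
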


Let \(P,Q\) be as above and $\rho \geqslant 1$. Then $\rho Q = 
\{x \in \RR^d \mid Ax \leqslant \rho b\}$ and the slack matrix 
of the pair $P,\rho Q$ is related to the slack matrix of the pair 
$P,Q$ in the following way:

\begin{gather*}
  S^{P,\rho Q}_{\mathrm{vertex}}(i,j) = \rho b_{i} - A_{i} v_{j}
  = (\rho - 1) b_{i} + b_{i} - A_{i} v_{j}
  = S^{P,Q}_{ij} + (\rho-1) b_{i}, \\
  S^{P,\rho Q}_{\mathrm{ray}}(i,j)
  = S^{P,Q}_{ij}.
\end{gather*}

Theorem \ref{thm:sandwich} directly yields the following result.  

\begin{thm}
\label{thm:framework}
Consider a maximization problem with a linear encoding.
Let $P, Q \subseteq \RR^d$ be the pair of polyhedra
associated with the linear encoding, and let 
$\rho \geqslant 1$. Consider any slack matrix $S^{P,Q}$ for the 
pair $P, Q$ and the corresponding slack matrix $S^{P,\rho Q}$ for 
the pair $P, \rho Q$. Then the minimum size of a $\rho$-approximate EF
of the problem, w.r.t.\ the considered linear encoding, is $\nnegrk (S^{P,\rho Q}) +
\Theta(1)$, where the constant is $0$ or $1$. For a minimization problem, the minimum 
size of a $\rho$-approximate EF is $\nnegrk (S^{P,\rho^{-1}Q}) + \Theta(1)$. 
\end{thm}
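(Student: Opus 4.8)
The plan is to derive Theorem~\ref{thm:framework} as an essentially immediate corollary of Theorem~\ref{thm:sandwich}, using the observation recorded just before the statement about how the slack matrix of the pair $P, \rho Q$ relates to that of $P, Q$. First I would recall that, by the definitions in the previous subsection, a $\rho$-approximate EF of a maximization problem with linear encoding $(\mathcal{L},\mathcal{O})$ is precisely an extended formulation $Ex+Fy=g$, $y\geqslant\mathbf{0}$ whose projection $K$ satisfies $P\subseteq K\subseteq \rho Q$; in other words, a $\rho$-approximate EF of the problem is exactly an EF of the \emph{pair} $P,\rho Q$ in the sense of the previous subsection. Hence the minimum size of a $\rho$-approximate EF equals $\xc(P,\rho Q)$, and the analogous statement with $\rho^{-1}Q$ in place of $\rho Q$ holds for minimization problems. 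This reduction is the conceptual heart of the argument and follows directly from unwinding the definitions.

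Next I would apply Theorem~\ref{thm:sandwich} to the pair $P,\rho Q$ (respectively $P,\rho^{-1}Q$). That theorem gives $\nnegrk(S^{P,\rho Q})-1\leqslant \xc(P,\rho Q)\leqslant \nnegrk(S^{P,\rho Q})$ for any slack matrix of the pair, which is exactly the claimed statement that the minimum size is $\nnegrk(S^{P,\rho Q})+\Theta(1)$ with the constant in $\{0,-1\}$ (equivalently, $\nnegrk(S^{P,\rho Q}) + \Theta(1)$ with constant $0$ or $1$ subtracted, matching the phrasing). I would then note that the slack matrix $S^{P,\rho Q}$ obtained from the given outer description $Ax\leqslant b$ of $Q$ — by scaling the right-hand side to $\rho b$ — is precisely the ``corresponding'' slack matrix referred to in the statement, and that its vertex block is $S^{P,Q}_{ij}+(\rho-1)b_i$ and its ray block is unchanged, exactly as displayed before the theorem. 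Strictly speaking one should check that the inner description of $P$ used to build $S^{P,\rho Q}$ is the same as the one used for $S^{P,Q}$, which is immediate since $P$ is unchanged.

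The one genuinely nontrivial point — and the step I expect to be the main obstacle, albeit a mild one — is justifying that the additive constant is $0$ or $1$ uniformly, i.e.\ that we are always in the regime where Theorem~\ref{thm:sandwich} applies nondegenerately or where the $\pm1$ slack is harmless. Here I would invoke the last sentence of Theorem~\ref{thm:sandwich}: when $P$ is a $0/1$-polytope of dimension at least $1$ (which is the case for the linear encodings of interest, e.g.\ CLIQUE, once $d\geqslant 1$ and $\mathcal{L}\cap\{0,1\}^d$ is not a single point) and $Q$, hence $\rho Q$, has a recession cone that is not full-dimensional, one gets the exact equality $\xc(P,\rho Q)=\nnegrk(S^{P,\rho Q})$; in the remaining degenerate cases the minimum size is $0$ and $\nnegrk(S^{P,\rho Q})\in\{0,1\}$, so the $\Theta(1)$ statement still holds. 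For the minimization case the argument is verbatim the same after replacing $\rho Q$ by $\rho^{-1}Q$ and using the minimization form of the approximate-EF definition and the corresponding right-hand-side scaling $b\mapsto \rho^{-1}b$. I would present this as a short paragraph rather than a detailed case analysis, since all the real work has already been done in Theorem~\ref{thm:sandwich}.
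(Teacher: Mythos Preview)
Your proposal is correct and matches the paper's approach exactly: the paper simply states that ``Theorem~\ref{thm:sandwich} directly yields the following result'' with no further argument, and your write-up spells out precisely the reduction the paper leaves implicit (identifying a $\rho$-approximate EF with an EF of the pair $P,\rho Q$, then invoking Theorem~\ref{thm:sandwich}). If anything, you are giving more detail than the paper does, including the discussion of the degenerate cases behind the $\Theta(1)$ constant.
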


Fixing $\rho \geqslant 1$, Theorem \ref{thm:framework} characterizes
the minimum number of inequalities in any LP providing a 
$\rho$-approximation for the problem under consideration. 
We point out that the theorem directly generalizes to SDPs, 
by replacing nonnegative rank by PSD rank~\citep{GouveiaParriloThomas2011}. Here, we focus 
on LPs and nonnegative rank. As a matter of fact, strong lower 
bounds on the PSD rank seem to be currently lacking.

\subsection{A Problem with no Polynomial-Size Approximate EF}
\label{sec:case-correlation-cut}

We conclude this section with an example
showing the necessity to restrict the set of admissible
objective functions rather than allowing every $w \in \RR^*$
(that is $P = Q$).

Let $K_n = (V_n,E_n)$ denote the $n$-vertex complete graph. 
For a set $X$ of vertices of $K_n$, we let $\delta(X)$ denote 
the set of edges of $K_n$ with one endpoint in $X$ and the other 
in its complement $\bar{X}$. This set $\delta(X)$ is known as the 
\emph{cut} defined by $X$. For a subset $F$ of edges of $K_n$, we 
let $\chi^{F} \in \mathbb{R}^{E_n}$ denote the \emph{characteristic vector}
(or \emph{incidence vector}) of $F$, with $\chi^F_e = 1$ if $e \in F$ and 
$\chi^F_e = 0$ otherwise. The \emph{cut polytope} $\CUT(n)$ is defined as 
the convex hull of the characteristic vectors of all cuts in the complete 
graph $K_n = (V_n,E_n)$. That is,
\[
\CUT(n) \coloneqq \conv{\{\chi^{\delta(X)} \in \mathbb{R}^{E_n} \mid X \subseteq V_n\}}.
\]
A related object is the \emph{cut cone}, defined as the cone generated by 
the \emph{cut-vectors} $\chi^{\delta(X)}$:
\[
\CUTCONE(n) \coloneqq \cone{\{\chi^{\delta(X)} \in \mathbb{R}^{E_n} 
\mid X \subseteq V_n\}}.
\]

Consider the maximum cut problem (Max~CUT) with 
\emph{arbitrary} weights, and its usual linear encoding.
With this encoding we have $P = Q = \CUT(n)$.
Our next result states that this problem has no $\rho$-approximate
EF, whatever $\rho \geqslant 1$ is. Intuitively, this phenomenon 
stems from the fact that, because $\mathbf{0}$ is a vertex of the cut
polytope, every approximate EF necessarily
\lq{}captures\rq{} all facets of the cut polytope incident to $\mathbf{0}$
(see Figure~\ref{fig:CUT3}). These facets define the cut cone, which turns
out to have high extension complexity. Although this follows rather easily
from ideas of \cite{extform4}, we include a proof here for completeness. 

\begin{figure}[ht]
\centering
\input{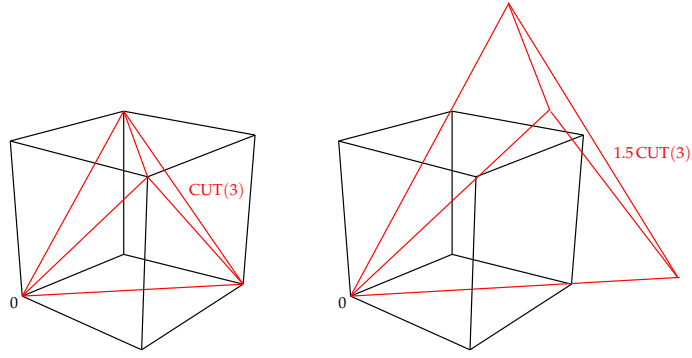}
\caption{$\CUT(3)$ and a dilate $\rho \CUT(3)$ for $\rho = 1.5$.}
\label{fig:CUT3}
\end{figure}

\begin{prop}
\label{prop:MAXCUT_arbitrary}
For every $\rho \geqslant 1$, every $\rho$-approximate EF of the
Max~CUT problem with arbitrary weights has size $2^{\Omega(n)}$.
More precisely, disregarding the value of $\rho \geqslant 1$, we
have $\xc(\CUT(n),\rho \CUT(n)) = 2^{\Omega(n)}$.
\begin{proof}
Let $Ex + Fy = g$, $y \geqslant \mathbf{0}$ denote a minimum size
$\rho$-approximate EF of $\CUT(n)$. We claim that
\begin{equation}
\label{eq:CUTCONE_EF}
Ex + Fy - \lambda g = \mathbf{0},\ y \geqslant \mathbf{0},\ \lambda \geqslant 0
\end{equation}
is an EF of the cut cone. Let $K$ be the polyhedron obtained by projecting 
the set of solutions of \eqref{eq:CUTCONE_EF} into $x$-space. Clearly, $K$ 
is a cone containing all the cut-vectors $\chi^{\delta(X)}$, from which we get
that $\CUTCONE(n) \subseteq K$. Now take any point $(x,y,\lambda)$ 
satisfying \eqref{eq:CUTCONE_EF}. If $\lambda = 0$ then necessarily
$x = \mathbf{0}$ because $Ex + Fy = \mathbf{0}$,
$y \geqslant \mathbf{0}$ defines the recession
cone of a polyhedron that projects into $\rho\CUT(n)$, which is bounded.
In this case we have $x = \mathbf{0} \in \CUTCONE(n)$.
Assume that $\lambda > 0$.
Then $E\lambda^{-1}x + F\lambda^{-1}y = g$ and
$\lambda^{-1}y \geqslant \mathbf{0}$
which implies that $\lambda^{-1}x$ is in $\rho \CUT(n)$. Thus $\rho^{-1}
\lambda^{-1}x$ is in $\CUT(n)$ and $x$ is thus a positive combination of
cut-vectors, hence $x \in \CUTCONE(n)$. This yields $K \subseteq \CUTCONE(n)$.
In conclusion, $K = \CUTCONE(n)$ and \eqref{eq:CUTCONE_EF} is an EF of the
cut cone. The size of this EF is at most $r + 1$,
where $r$ denotes the
size of the given $\rho$-approximate EF of $\CUT(n)$.
Thus $\xc(\CUTCONE(n)) \leqslant r +1$.

By using the correlation mapping (see \cite[p.\ 55]{DezaLaurent}), the cut cone 
has the same extension complexity as its corresponding \emph{correlation cone}, defined as
\[
\CORCONE(n-1)  \coloneqq \cone{\set{\binom{b_0}{b}\binom{b_0}{b}^\intercal}{b_0 \in \binSet, b \in \binSet^{n-2}}}.
\]
We claim that the unique disjointness matrix on $[n-2]$ can be embedded in a slack 
matrix of $\CORCONE(n-1)$. To prove this, consider the $(n-1) \times (n-1)$ rank-$1$ 
positive semidefinite matrices
\begin{equation}
\label{eq:PSD_factorization}
T_a \coloneqq \binom{-1}{a}\binom{-1}{a}^\intercal
\qquad
\text{and}
\qquad
U^b \coloneqq \binom{1}{b}\binom{1}{b}^\intercal
\end{equation}
where $a, b \in \{0,1\}^{n-2}$. The Frobenius inner product $\inp{T_a}{z} \geqslant 0$
of $T_a$ with any correlation matrix $z=\binom{b_0}{b}\binom{b_0}{b}^\intercal$ is 
nonnegative because both matrices are positive semidefinite. Thus $\inp{T_a}{z} \geqslant 0$ 
is valid for all points $z \in \CORCONE(n-1)$, for all $a \in \{0,1\}^{n-2}$. Moreover, 
$\inp{T_a}{U^b} = (1 - a^\intercal b)^2$ for all $a, b \in \{0,1\}^{n-2}$ and thus 
$\inp{T_a}{U^b} = \text{UDISJ}(a,b)$ provided $a^\intercal b \in \{0,1\}$. 

From what precedes, the slack of correlation matrix $U^b$ with respect to the valid inequality 
$\inp{T_a}{z} \geqslant 0$ is $\text{UDISJ}(a,b)$ provided $a^\intercal b \in \{0,1\}$. 
Therefore, $\CORCONE(n-1)$ has a slack matrix that contains UDISJ on $[n-2]$. Because the 
nonnegative rank of any matrix containing UDISJ is $2^{\Omega(n)}$ (this follows from 
\citep{Razborov92}, see \cite[Theorem 1]{extform4}), we conclude that the nonnegative 
rank of some slack matrix of $\CORCONE(n-1)$ is $2^{\Omega(n)}$.
From Theorem
\ref{thm:sandwich} applied to $P = Q = \CORCONE(n-1)$, it follows that $\xc(\CORCONE(n-1)) 
= 2^{\Omega(n)}$. Thus we get 
\[
r + 1 \geqslant \xc(\CUTCONE(n)) = \xc(\CORCONE(n-1)) = 2^{\Omega(n)}
,
\]
from which we obtain $r = 2^{\Omega(n)}$.
The result then follows immediately.
\end{proof}
\end{prop}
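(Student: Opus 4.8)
The plan is to reduce the existence of a small $\rho$-approximate EF of $\CUT(n)$ to a small EF of the homogeneous cut cone, and then to lower bound the extension complexity of the latter by embedding the unique disjointness matrix into one of its slack matrices.

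The first step exploits that $\mathbf 0$ is a vertex of $\CUT(n)$: any $K$ with $\CUT(n)\subseteq K\subseteq\rho\CUT(n)$ must ``see'' all facets of $\CUT(n)$ through $\mathbf 0$, and these facets generate $\CUTCONE(n)$. Concretely, I would start from a minimum-size $\rho$-approximate EF $Ex+Fy=g$, $y\geqslant\mathbf 0$ of $\CUT(n)$, homogenize it to $Ex+Fy-\lambda g=\mathbf 0$, $y\geqslant\mathbf 0$, $\lambda\geqslant 0$, and show that its projection to $x$-space is exactly $\CUTCONE(n)$. The inclusion $\CUTCONE(n)\subseteq K$ is immediate (each cut vector lifts with $\lambda=1$); for the reverse, a solution with $\lambda>0$ rescales by $\lambda^{-1}$ to a point of $\rho\CUT(n)$, hence a nonnegative combination of cut vectors, while a solution with $\lambda=0$ forces $x=\mathbf 0$, because $Ex+Fy=\mathbf 0$, $y\geqslant\mathbf 0$ cuts out the recession cone of a polyhedron whose $x$-projection lies in the bounded set $\rho\CUT(n)$. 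This yields $\xc(\CUTCONE(n))\leqslant r+1$, where $r$ is the size of the given approximate EF.

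The second step passes from $\CUTCONE(n)$ to the correlation cone $\CORCONE(n-1)$ via the classical correlation (covariance) map, a linear isomorphism, so the two cones have equal extension complexity (cf.\ \cite[p.\ 55]{DezaLaurent}). The heart of the matter is then to plant UDISJ on $[n-2]$ inside a slack matrix of $\CORCONE(n-1)$. For this I would use the rank-one positive semidefinite matrices $T_a\coloneqq\binom{-1}{a}\binom{-1}{a}^\intercal$ and $U^b\coloneqq\binom{1}{b}\binom{1}{b}^\intercal$ with $a,b\in\binSet^{n-2}$. Since every generator of $\CORCONE(n-1)$ is itself PSD and $T_a$ is PSD, the Frobenius inner product $\inp{T_a}{z}\geqslant 0$ is a valid inequality over $\CORCONE(n-1)$; and a one-line computation gives $\inp{T_a}{U^b}=(1-a^\intercal b)^2$, which equals $\text{UDISJ}(a,b)$ whenever $a^\intercal b\in\{0,1\}$. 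Hence the slack matrix of $\CORCONE(n-1)$ with rows indexed by the $T_a$ and columns indexed by the generators $U^b$, $b\in\binSet^{n-2}$, contains the UDISJ matrix on $[n-2]$ as a submatrix.

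The final step invokes Razborov's rectangle corruption lemma in the form ``every matrix containing UDISJ has nonnegative rank $2^{\Omega(n)}$'' (\cite[Theorem 1]{extform4}), so some slack matrix of $\CORCONE(n-1)$ has nonnegative rank $2^{\Omega(n)}$; Theorem~\ref{thm:sandwich} applied with $P=Q=\CORCONE(n-1)$ then gives $\xc(\CORCONE(n-1))=2^{\Omega(n)}$. Chaining $r+1\geqslant\xc(\CUTCONE(n))=\xc(\CORCONE(n-1))=2^{\Omega(n)}$ yields $r=2^{\Omega(n)}$, which is the claim, and the same bound on $\xc(\CUT(n),\rho\CUT(n))$ follows since $r$ is the minimum EF size. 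I expect the only delicate point to be the homogenization argument — in particular ruling out extra points coming from the $\lambda=0$ fibre — as the correlation isomorphism is classical and Razborov's bound is used as a black box.
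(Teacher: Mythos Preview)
Your proposal is correct and follows the paper's proof essentially step for step: the same homogenization of the approximate EF to obtain an EF of $\CUTCONE(n)$ (with the same recession-cone argument for $\lambda=0$), the same passage to $\CORCONE(n-1)$ via the correlation map, and the same embedding of UDISJ using $T_a=\binom{-1}{a}\binom{-1}{a}^\intercal$ and $U^b=\binom{1}{b}\binom{1}{b}^\intercal$, followed by Theorem~\ref{thm:sandwich} and the Razborov bound. There is nothing to add.
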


\section{Extension of Razborov's Lemma and Shifts of Unique Disjointness}
\label{sec:push-razb-result}

In the first subsection we generalize Razborov's famous lemma
on the disjointness problem
(see \cite{Razborov92} or \citet[Lemma~4.49]{KushilevitzNisan97}
for the original version).
In the next subsection
we apply it to shift the UDISJ matrix
without significantly decreasing its nonnegative rank,
which will be used in later sections to obtain
lower bounds on approximate extended formulations.

The main improvements to Razborov's lemma are threefold:
\begin{enumerate*}
\item the dependence on the error parameter $\epsilon$
  is made explicit;
\item better analytical estimations are employed
  to improve overall strength of the statement; 
\item probabilities are generalized to expected values
  to homogenize the proof and yield a stronger lemma.
\end{enumerate*}

\subsection{Extension of Razborov's Rectangle Corruption Lemma}
\label{sec:Razborov-lemma}

Suppose that \(n \equiv 3 \pmod{4}\) and let 
\begin{align*}
  \ell &\coloneqq \frac{n+1}{4},\\
  A &\coloneqq \{(a,b) \in 2^{[n]} \times 2^{[n]} \mid |a| = |b| = \ell,\ \card{a \cap b} = 0\},\\
  B &\coloneqq \{(a,b) \in 2^{[n]} \times 2^{[n]} \mid |a| = |b| = \ell,\ \card{a \cap b} = 1\}.
\end{align*}
Thus $A$ is the set of \emph{disjoint} pairs of $\ell$-subsets and 
$B$ is the set of \emph{barely intersecting} pairs of $\ell$-subsets.
Furthermore, let \(\mu\) be any distribution on pairs \((a,b)\) of 
subsets of \([n]\) that is supported on $A \cup B$ and uniform when
conditioned to either $A$ or $B$.

\begin{lem}
\label{lem:Razborov_lemma}
Let $n$, $\ell$, $A$, $B$ and $\mu$ be as above. For every nonnegative 
functions $f$ and $g$ defined on $2^{[n]} \times 2^{[n]}$
we introduce a random variable \(X \coloneqq f(a)g(b)\).
Then for every \(0 < \epsilon < 1\):
\begin{equation}
  \label{eq:Razborov_lemma}
   (1 - \epsilon) \expectProv{X}{A} - \expectProv{X}{B} \leqslant
   \maxNorm{X \restriction (A \cup B)} 2^{-\frac{\epsilon^{2}}{16 \ln 2} \ell + O(\log \ell)},
\end{equation}
where the constant in the \(O(\log \ell)\) is absolute, and $X \restriction (A \cup B)$ denotes
the restriction of $X$ to $A \cup B$.
\end{lem}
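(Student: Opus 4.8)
The plan is to follow the architecture of Razborov's original argument, but track the dependence on $\epsilon$ explicitly and replace indicator-type quantities by the products $f(a)g(b)$ throughout. The key observation is that the bound is scale-invariant in $X$ (both sides scale linearly under $X \mapsto cX$), so we may normalize $\maxNorm{X \restriction (A \cup B)} = 1$, i.e.\ $f(a)g(b) \leqslant 1$ for all $(a,b) \in A \cup B$. After this normalization the task becomes: show that $(1-\epsilon)\expectProv{X}{A} - \expectProv{X}{B}$ is at most $2^{-c\epsilon^2\ell + O(\log\ell)}$ with $c = 1/(16\ln 2)$. The natural dichotomy is on the magnitude of $\expectProv{X}{A}$: if it is already tiny (below the target bound), we are immediately done since $\expectProv{X}{B} \geqslant 0$; so we may assume $\expectProv{X}{A}$ is not too small, say at least $2^{-c\epsilon^2\ell + O(\log\ell)}$, and must then derive a comparably large lower bound on $\expectProv{X}{B}$.

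The heart of the argument is a \emph{sunflower-free / spreadness} step. Set $p \coloneqq \expectProv{X}{A}$. First I would argue that, because $p$ is not too small, the ``mass'' of $f$ and $g$ cannot be too concentrated on pairs that share a common core: quantitatively, there is a small set $S \subseteq [n]$ (of size $O(\log(1/p)/\log\ell) = O(\epsilon^2\ell/\log\ell)$, hence $O(\epsilon^2 \ell)$ up to constants — here one needs to be careful, the right statement is $|S| \leqslant$ some constant, or $|S|$ small relative to $\ell$) such that, after conditioning on $a \cap S$ and $b \cap S$ appropriately, the conditional versions of $f$ and $g$ are ``spread'': no single element outside $S$ is contained in more than, say, a $\ell/n$-ish fraction of the relevant $a$'s (weighted by $f$), and likewise for $g$. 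The standard way to get this is an iterative cleaning: as long as some element $i$ is heavy for $f$ (contained in an atypically large $f$-weighted fraction of sets), add $i$ to $S$, condition on $i \in a$, and renormalize; each such step multiplies the surviving $A$-mass by a factor bounded below, so the process terminates after $O(\log(1/p))$ steps, keeping $|S|$ controlled. Symmetrically for $g$ and $b$.

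Once $f, g$ (conditioned) are spread, one estimates $\expectProv{X}{B}$ by a union-bound / inclusion argument: a barely-intersecting pair $(a,b) \in B$ is obtained from a disjoint pair by ``gluing'' one element, and spreadness ensures that the conditional $B$-distribution is, up to a $\mathrm{poly}(\ell)$ factor, dominated from below by (a product of) the conditional $A$-distribution restricted to the spread parts — so $\expectProv{X}{B} \gtrsim \ell^{-O(1)} \cdot (\text{conditional } \expectProv{X}{A})^{?}$. Comparing this with the lower bound $p$ on the unconditional $\expectProv{X}{A}$, and absorbing the conditioning factors (each of the $O(\epsilon^2\ell)$ conditioning steps costs at most a constant factor, contributing a total $2^{O(\epsilon^2\ell)}$, which is where the constant $1/(16\ln 2)$ and the slack in front of $\epsilon^2\ell$ must be tuned so the net loss stays below the $(1-\epsilon)$ gap), one concludes $\expectProv{X}{B} \geqslant (1-\epsilon)p - 2^{-c\epsilon^2\ell + O(\log\ell)}$, which rearranges to the claim. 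The analytic estimates (Chernoff-type tail bounds for hypergeometric-like quantities, Stirling approximations of the binomials $\binom{n}{\ell}$, $\binom{\ell}{1}\binom{n-\ell}{\ell-1}$ etc.) enter precisely in calibrating the constant $\epsilon^2/(16\ln 2)$; getting this constant — rather than just \emph{some} $\Omega(\epsilon^2)$ — is the main technical obstacle, and it is essentially a matter of using the sharp form of the tail bounds and being economical in the cleaning step rather than any new idea. The generalization from probabilities to the expectations $\expectedValue{f(a)g(b)}$ is, as the authors remark, mostly cosmetic: it lets the cleaning step be phrased uniformly as a reweighting of $f$ and $g$, but introduces no genuinely new difficulty beyond bookkeeping the normalization $\maxNorm{X \restriction (A\cup B)}$.
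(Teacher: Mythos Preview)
Your proposal does not match the paper's proof and, more importantly, does not constitute a proof as written. The paper follows Razborov's original architecture: it introduces a random partition $T=(T_1,T_2,\{i\})$ of $[n]$ under which $a$ and $b$ become \emph{conditionally independent}, defines the four averages $\Row_j(T)=\expectProv{f(a)}{T,\, i\in a \text{ iff } j=1}$ and $\Col_j(T)$ analogously, and shows $\expectProv{X}{A}=\expectedValue{\Row_0\Col_0}$, $\expectProv{X}{B}=\expectedValue{\Row_1\Col_1}$. The whole game is then to bound $\Row_0\Col_0-\Row_1\Col_1$ by $\Row_0|\Col_0-\Col_1|+|\Row_0-\Row_1|\Col_0$, split on whether $\expectProv{f(a)}{T_2}$ (resp.\ $\expectProv{g(b)}{T_1}$) exceeds a threshold $2^{-\delta\ell-1}\maxNorm{f}$, and in the ``big'' case run an \emph{entropy argument}: define a tilted distribution $s$ on $\ell$-subsets of $[n]\setminus T_2$ with $\probability{s=x}\propto f(x)$, upper-bound $H(s\mid T_2)$ by subadditivity as $2\ell\,\expectedValue{H(\lambda)}$ with $\lambda=\probability{i\in s}$, lower-bound it by flatness, and use the Taylor estimate $1-H(\lambda)\geqslant(1-2\lambda)^2/(2\ln 2)$ to conclude $\expectedValue{|1-2\lambda|}$ is small. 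This is exactly where the constant $\epsilon^2/(16\ln 2)$ comes from: setting $2\sqrt{\delta\ln 2}=\epsilon/2$ after the entropy--Pinsker step. The ``small'' case is a one-line $\maxNorm{\cdot}$ bound.

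Your sunflower/spreadness outline has no analogue of the conditional-independence device, which is the crux of every known rectangle-corruption proof for UDISJ. Concretely: your cleaning step (``condition on $i\in a$ and renormalize'') changes the support of $a$ away from $\binom{[n]}{\ell}$, so the comparison between $A$ and $B$ after cleaning is no longer between the original distributions; you leave the exponent in the key comparison as a literal question mark; and you concede you do not see how to extract the constant $1/(16\ln 2)$, which is not a matter of sharper tail bounds but of the specific entropy inequality above. As written this is a sketch of a hope, not a proof. If you want to rebuild the argument, the two missing ideas are (i) the random partition $T$ that decouples $a$ and $b$, and (ii) the entropy-of-the-tilted-distribution step that converts ``$f$ has large average'' into ``$\Row_0(T)\approx\Row_1(T)$ on average over $i$''.
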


Let us write \(I_{C}\) for the indicator of an event \(C\).
In case $f$ and $g$ are both binary, $X$ is the indicator
of a rectangle
$R$, that is $X = I_R$, and \eqref{eq:Razborov_lemma} becomes
\[
(1 - \epsilon) \probProv{R}{A} - \probProv{R}{B} 
\leqslant 2^{-\frac{\epsilon^{2}}{16 \ln 2} \ell + O(\log \ell)},
\]
which is a strengthened version of Razborov's original lemma.

For concreteness, the reader
might find it helpful to imagine that $X$ is the indicator of a 
rectangle in the proof below. Our proof is inspired by the version in 
\citet[Lemma~4.49]{KushilevitzNisan97} and we adopt similar notations.

\begin{proof}[Proof of Lemma~\ref{lem:Razborov_lemma}] The proof is in four main steps.\medskip

\noindent \textsl{Step 1: Expressing $\expectProv{X}{A}$ and
  $\expectProv{X}{B}$ in an alternative framework.}
The statement of the lemma does not depend on
the actual probabilities of \(A\) and \(B\),
hence for convenience, we fix them as
\[
\probability{A} = \frac{3}{4} \qquad \text{and} \qquad
\probability{B} = \frac{1}{4}.
\]
This brings the advantage of the following
alternative description of \(\mu\).

Let \(T = (T_1,T_2,\{i\})\) be a uniformly chosen partition
of \([n]\) into two subsets \(T_{1}\), \(T_{2}\) with \(2 \ell - 1\)
elements each and one singleton \(\{i\}\). Given \(T\) we choose $a$ as 
a uniform $\ell$-subset of $T_1 \cup \{i\} = [n] \setminus T_2$ 
and $b$ as a uniform $\ell$-subset of $T_2 \cup \{i\} = [n] 
\setminus T_1$, independently. This defines a distribution 
$\mu$ that is supported on $A \cup B$, uniform when conditioned 
to either $A$ or $B$ and satisfies
$\probability[T]{B} = \probability[T]{i \in a, i \in b}
= \probability[T]{i \in a} \probability[T]{i \in b} = (1/2)^2 = 1/4$
and thus $\probability[T]{A} = 1 - 1/4 = 3/4$.
In particular,
\(\probability{A} = 3/4\) and \(\probability{B} = 1/4\),
as required.

We begin by rewriting $\expectProv{X}{B}$ and then $\expectProv{X}{A}$ in terms
of the following functions of $T$:
\begin{align}
  \label{eq:Row}
  \Row_{0}(T) &\coloneqq \expectProv{f(a)}{T, i \notin a}, &
  \Row_{1}(T) &\coloneqq \expectProv{f(a)}{T, i \in a}, \\
  \label{eq:Col}
  \Col_{0}(T) &\coloneqq \expectProv{g(b)}{T, i \notin b}, &
  \Col_{1}(T) &\coloneqq \expectProv{g(b)}{T, i \in b}.
\end{align}

We note the following nice interpretation of \(\Row_{0}(T) + \Row_{1}(T)\) and 
\(\Col_{0}(T) + \Col_{1}(T)\), that we will use at the end of the 
proof:
\begin{align}
  \label{eq:21}
  \expectProv{f(a)}{T} &
  \begin{aligned}[t]
    &=
    \underbrace{\expectProv{f(a)}{T, i \in a}}_{\Row_{1}(T)}
    \cdot
    \underbrace{\probProv{i \in a}{T}}_{1/2}
    +
    \underbrace{\expectProv{f(a)}{T, i \notin a}}_{\Row_{0}(T)}
    \cdot
    \underbrace{\probProv{i \notin a}{T}}_{1/2}
    \\
    &= \frac{\Row_{0}(T) + \Row_{1}(T)}{2},
  \end{aligned}
  \\
  \label{eq:22}
  \expectProv{g(b)}{T} &= \frac{\Col_{0}(T) + \Col_{1}(T)}{2}.
\end{align}

Note that:
\begin{enumerate*}
\item
  the distribution of $(a,b)$ conditioned on a given $T$ is a product
distribution (this local independence property is the main reason why we reinterpret the
distribution $\mu$);
\item
  the marginal distributions of $a$ conditioned on
$(T, i \in a, i \in b)$ and $(T, i \in a)$ are the same (and similarly for $b$, we can
remove the condition $i \in a$).
\end{enumerate*}
From these facts, we get
\begin{equation}
\label{eq:expectation_B}
\begin{aligned}
\expectProv{X}{B} 
&= \expectProv{f(a)g(b)}{i \in a, i \in b}\\
&= \expectedValue{\expectProv{f(a)g(b)}{T, i \in a, i \in b}}\\
&= \expectedValue{\expectProv{f(a)}{T, i \in a, i \in b} \expectProv{g(b)}{T, i \in a, i \in b}}\\
&= \expectedValue{\expectProv{f(a)}{T, i \in a} \expectProv{g(b)}{T, i \in b}}\\
&= \expectedValue{\Row_1(T) \Col_1(T)}.
\end{aligned}
\end{equation}
By similar arguments, we find
\begin{align}
\expectProv{X}{A} 
&= \frac{1}{3} \expectProv{f(a)g(b)}{i \notin a, i \notin b} + 
\frac{1}{3} \expectProv{f(a)g(b)}{i \in a, i \notin b} +
\frac{1}{3} \expectProv{f(a)g(b)}{i \notin a, i \in b}\\ 
&= \frac{1}{3} \expectedValue{\Row_0(T) \Col_0(T)} + 
\frac{1}{3} \expectedValue{\Row_1(T) \Col_0(T)} +
\frac{1}{3} \expectedValue{\Row_0(T) \Col_1(T)}.
\end{align}

Pick a $(2\ell-1)$-subset $T_2$ of $[n]$,
that we consider fixed for the time being.
The marginal distributions of $a$ conditioned on the events $T_2$, $(T_2, i \in a)$ 
and $(T_2, i \notin a)$ are the same, namely, the uniform distribution on the 
$\ell$-subsets of $[n] \setminus T_2$. (Note that fixing $T_2$ does not fix $i$, 
which could be any element of $[n] \setminus T_2$.) Thus, we get
\begin{equation}
\label{eq:expectProv_f(a)}
\expectProv{f(a)}{T_2, i \notin a} = \expectProv{f(a)}{T_2, i \in a} = \expectProv{f(a)}{T_2}.
\end{equation}
On the other hand, we have
\begin{equation}
\label{eq:expectProv_Row0(T)}
\begin{aligned}
\expectProv{\Row_0(T)}{T_2}
& = \expectProv{\expectProv{f(a)}{T, i \notin a}}{T_2}\\
& = \expectProv{\frac{\expectProv{f(a) I_{i \notin a}}{T}}{\probProv{i \notin a}{T}}}{T_2}\\
& = 2 \expectProv{f(a) I_{i \notin a}}{T_2}\\
& = \expectProv{f(a)}{T_2, i \notin a}
\end{aligned}
\end{equation}
and similarly
\[
\expectProv{\Row_1(T)}{T_2} = \expectProv{f(a)}{T_2, i \in a}.
\]
From \eqref{eq:expectProv_f(a)}, we conclude
\begin{equation}
\label{eq:expectProv_Row0(1)_Row1(T)_equal}
\expectProv{ \Row_{0}(T) }{T_{2}} = \expectProv{ \Row_{1}(T) }{T_{2}}.
\end{equation}
Therefore (letting $T_2$ vary), 
\begin{align}
\expectedValue{\Row_1(T) \Col_0(T)}
& = \expectedValue{\expectProv{\Row_1(T) \Col_0(T)}{T_2}}\\
& = \expectedValue{\expectProv{\Row_1(T)}{T_2} \Col_0(T)}\\
& = \expectedValue{\expectProv{\Row_0(T)}{T_2} \Col_0(T)}\\
& = \expectedValue{\expectProv{\Row_0(T)\Col_0(T)}{T_2}}\\
& = \expectedValue{\Row_0(T) \Col_0(T)}.
\end{align}
The second and fourth equalities above are due to the fact that $\Col_0(T)$ is constant when $T_2$ is fixed. This is because $\Col_0(T) = \expectProv{g(b)}{T, i \notin b}$ depends only on $T_2$, as the marginal distribution of \(b\) given \((T, i \notin b)\) is uniform on the \(\ell\)-subsets of $T_2$.

Exchanging the roles of rows and columns, we have
\[
\expectedValue{\Row_1(T) \Col_0(T)} = \expectedValue{\Row_0(T) \Col_0(T)}.
\]
In conclusion, we find the following simple expression for $\expectProv{X}{A}$:
\begin{equation}
\label{eq:expectation_A}
\expectProv{X}{A} = \expectedValue{\Row_0(T) \Col_0(T)}.
\end{equation}
\medskip

\noindent \textsl{Step 2: Estimation of $\expectProv{X}{A} - \expectProv{X}{B}$.} 
Via obvious estimates:
\begin{equation}
\label{eq:rectangle_estimation}
\begin{aligned}
\MoveEqLeft
\Row_0(T) \Col_0(T) - \Row_1(T) \Col_1(T)\\
&\leqslant \Row_0(T) \Col_0(T) - \min\{\Row_{0}(T), \Row_{1}(T)\}
\cdot \min\{\Col_{0}(T), \Col_{1}(T)\}\\
&=
\begin{aligned}[t]
  &\Row_{0}(T) (\Col_{0}(T) - \min\{\Col_{0}(t), \Col_{1}(T)\})
  \\
  &
  + (\Row_{0}(T) - \min\{\Row_{0}(t), \Row_1(T)\})
  \min\{\Col_{0}(t), \Col_1(T)
\end{aligned}
\\
&\leqslant \Row_0(T) |\Col_0(T) - \Col_1(T)| + |\Row_0(T) - \Row_1(T)| \Col_0(T).
\end{aligned}
\end{equation}
This argument is depicted on Figure~\ref{fig:rectangle_trick}.

\begin{figure}[ht]
\centering
\input{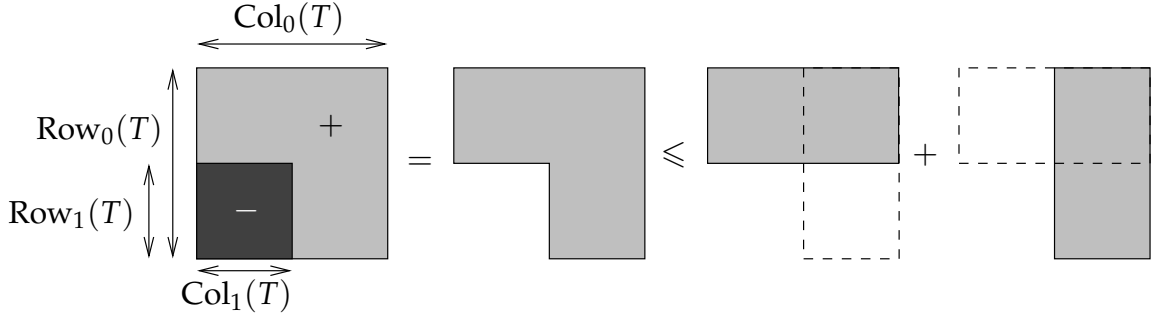}
\caption{The estimation of $\Row_0(T) \Col_0(T) - \Row_1(T) \Col_1(T)$.}
\label{fig:rectangle_trick}
\end{figure}

In Step 3 below,
we will define two events, $\ErowBig(T)$ and $\EcolBig(T)$.
The event $\Esmall(T)$ holds if and only if not both of
$\ErowBig(T)$ and $\EcolBig(T)$ hold.  Thus
\begin{equation}
\label{eq:Epartition}
1 = I_{\ErowBig(T) \cap \EcolBig(T)} + I_{\Esmall(T)}.
\end{equation}
From \eqref{eq:rectangle_estimation}, 
\begin{align}
&(\Row_0(T) \Col_0(T) - \Row_1(T) \Col_1(T)) \cdot I_{\ErowBig(T) \cap \EcolBig(T)}\\
&\leqslant (\Row_0(T) |\Col_0(T) - \Col_1(T)| + |\Row_0(T) - \Row_1(T)| \Col_0(T)) \cdot I_{\ErowBig(T) \cap \EcolBig(T)}\\
&\leqslant \Row_0(T) |\Col_0(T) - \Col_1(T)| \cdot I_{\EcolBig(T)} + |\Row_0(T) - \Row_1(T)| \Col_0(T) \cdot I_{\ErowBig(T)}.
\end{align}
Moreover, we obviously have
\[
(\Row_0(T) \Col_0(T) - \Row_1(T) \Col_1(T)) \cdot I_{\Esmall(T)} \leqslant \Row_0(T) \Col_0(T) \cdot I_{\Esmall(T)}.
\]
Below, we will prove 
\begin{align}
\label{eq:onesided_estimation1}
\expectedValue{\Row_0(T) |\Col_0(T) - \Col_1(T)| \cdot I_{\EcolBig(T)}}
&\leqslant \frac{\epsilon}{2} \expectedValue{\Row_0(T) \Col_0(T)},\\
\label{eq:onesided_estimation2}
\expectedValue{|\Row_0(T) - \Row_1(T)| \Col_0(T) \cdot I_{\ErowBig(T)}}
&\leqslant \frac{\epsilon}{2} \expectedValue{\Row_0(T) \Col_0(T)}, \quad \text{and}\\
\label{eq:small_estimation}
\expectedValue{\Row_0(T) \Col_0(T) \cdot I_{\Esmall(T)}}
&\leqslant \maxNorm{X \restriction (A \cup B)} 2^{-\frac{\epsilon^2}{16 \ln 2} - O(\log \ell)}
\end{align}
By \eqref{eq:expectation_B}, \eqref{eq:expectation_A} and \eqref{eq:Epartition}, 
these upper bounds imply
\begin{align}
&\expectProv{X}{A} - \expectProv{X}{B}\\
&= \expectedValue{\Row_0(T) \Col_0(T) - \Row_1(T) \Col_1(T)}\\
&= \expectedValue{(\Row_0(T) \Col_0(T) - \Row_1(T) \Col_1(T))\cdot(I_{\ErowBig(T) \cap \EcolBig(T)} + I_{\Esmall(T)})}\\
&\leqslant 2 \frac{\epsilon}{2} \expectedValue{\Row_0(T) \Col_0(T)} + 
\maxNorm{X \restriction (A \cup B)} 2^{-\frac{\epsilon^2}{16 \ln 2} \ell - O(\log \ell)}\\
&= \epsilon \expectProv{X}{A} + 
\maxNorm{X \restriction (A \cup B)} 2^{-\frac{\epsilon^2}{16 \ln 2} \ell - O(\log \ell)}
\end{align}
from which the result clearly follows, by rearranging.\medskip

\noindent \emph{Step 3. One-sided error estimation via entropy argument in the ``big'' case.}
Let \(\delta > 0\) be a constant to be chosen later. Essentially, $\delta$ will be the 
coefficient of \(\ell\) in the exponent. Let \(\ErowBig(T)\) denote the 
event \(\expectProv{f(a)}{T_2} > 2^{- \delta \ell -1} \maxNorm{f \restriction 
\binom{[n] \setminus T_{2}}{\ell}}\) where $f \restriction \binom{[n] \setminus T_{2}}{\ell}$ 
denotes the restriction of $f$ to $\ell$-subsets of $[n] \setminus T_{2}$. The event $\EcolBig(T)$
is defined in a similar way. These events depend only on $T_2$ and $T_1$, respectively.

Let \(T_{2}\) be fixed and assume that $\ErowBig(T)$ holds. In particular \(\expectProv{f(a)}{T_2}\) 
is positive. Because $\binom{2\ell - 1}{\ell-1} = \binom{2\ell - 1}{\ell}$, the 
distribution of $a$ given $T_2$ is the same as the distribution of $a$
given $T$, for every fixed choice of $i$. Thus, we have 
\begin{equation*}
  \expectProv{f(a)}{T} =
  \expectProv{f(a)}{T_{2}} =
  \sum_{\substack{x \subseteq [n] \setminus T_{2} \\ \size{x} = \ell}}
  \frac{1}{\binom{2\ell}{\ell}} f(x)
  .
\end{equation*}
(This holds when $f(a)$ is replaced by any function of $a$.)

We define \(s\) as a random \(\ell\)-subset of \([n] \setminus T_{2}\) with distribution
\[
\probProv{s = x}{T_{2}}
 = \frac{f(x)}{\binom{2\ell}{\ell} \expectProv{f(a)}{T_{2}}}
 = \frac{f(x)}{\sum_{\substack{y \subseteq [n] \setminus T_{2} \\ \size{y} = \ell}} f(y)}
 \leqslant \frac{2^{\delta \ell + 1}}{\binom{2\ell}{\ell}}.
\]
Let us introduce the shorthand notation \(\lambda \coloneqq \probProv{i \in s}{T_2}\).
Then
\begin{equation}
\lambda = \frac{\sum_{\substack{x \subseteq [n] \setminus T_{2}\\ \size{x} = \ell,\ x \ni i}} f(x)}{\sum_{\substack{y \subseteq [n] \setminus T_{2} \\ \size{y} = \ell}} f(y)}
= \frac{\frac{1}{\binom{2\ell}{\ell}}\sum_{\substack{x \subseteq [n] \setminus T_{2}\\ \size{x} = \ell,\ x \ni i}} f(x)}{\frac{1}{\binom{2\ell}{\ell}}\sum_{\substack{y \subseteq [n] \setminus T_{2} \\ \size{y} = \ell}} f(y)}
= \frac{\expectProv{f(a) I_{i \in a}}{T}}{\expectProv{f(a)}{T_2}}.
\end{equation}
Hence,
\begin{align}
\label{eq:8}
&\Row_{1} (T) 
= 2 \expectProv{f(a) I_{i \in a}}{T}
= 2 \expectProv{f(a)}{T_{2}} \cdot \probProv{i \in s}{T_2}
= 2 \lambda \expectProv{f(a)}{T_{2}},\\[1ex]
\label{eq:3}
&\Row_{0} (T) 
= 2 \expectProv{f(a) I_{i \notin a}}{T}
= 2 \expectProv{f(a)}{T_{2}} \cdot \probProv{i \notin s}{T_2}
= 2 (1 - \lambda) \expectProv{f(a)}{T_{2}}.
\end{align} 

We now estimate the entropy of \(s\). On the one hand, by subadditivity of the entropy,
we get the following upperbound on $\entropyProv{s}{T_{2}}$:
\begin{equation}
  \label{eq:7}
  \entropyProv{s}{T_{2}}
  \leqslant
  \sum_{j \in [n] \setminus T_{2}} \entropyProv{I_{j \in s}}{T_{2}}
  = 2 \ell \expectProv{\entropy{\lambda}}{T_{2}}.
\end{equation}
In this last equation, $\entropy{\lambda}$ denotes the binary entropy of \(\lambda\). 
On the other hand, we get a lower bound on $\entropyProv{s}{T_2}$ from 
our upper bound on the distribution of $s$ (which induces ``flatness'' of the distribution):
\begin{equation}
  \label{eq:6}
  \begin{split}
  \entropyProv{s}{T_{2}} &= \sum_{x}
  \probProv{s = x}{T_{2}} \log \frac{1}{\probProv{s = x}{T_{2}}}
  \\
  &\geqslant \sum_{x}
  \probProv{s = x}{T_{2}} \log
  \frac{\binom{2\ell}{\ell}}{2^{\delta \ell + 1}}
  =  \log \frac{\binom{2\ell}{\ell}}{2^{\delta \ell + 1}}
  = 2 \ell
  \left( 1 - \frac{\delta}{2}
    - O \genfrac(){}{}{\log \ell}{\ell} \right).
\end{split}
\end{equation}
This implies
\begin{equation}
  \label{eq:16}
  \frac{\delta}{2} + O \genfrac(){}{}{\log \ell}{\ell}
  \geqslant
  \expectProv{1 - \entropy{\lambda}}{T_{2}}.
\end{equation}

To estimate this expression,
we use the Taylor expansion
of the binary entropy function at \(1/2\):
\begin{gather}
  \label{eq:17}
  1 - \entropy{x} 
%=
%  \frac{1}%
%  {2 \ln 2}
%  \sum_{n=1}^{\infty}
%  \frac{{\left( 1 - 2 x \right)}^{2n}}{n (2n - 1)}
  \geqslant
  \frac{
    {\left( 1 - 2 x \right)}^{2}}%
  {2 \ln 2}.
\end{gather}
Hence \eqref{eq:16} yields
\begin{equation}
  \label{eq:23}
  \frac{\delta}{2} + O \genfrac(){}{}{\log \ell}{\ell}
  \geqslant
  \frac{
    \expectProv{
      {\left(
          1 - 2 \lambda
        \right)}^{2}}{T_{2}}}%
  {2 \ln 2}
  \geqslant
  \frac{
    {\left(
        \expectProv{ \abs{1 - 2 \lambda} }{T_{2}}
      \right)}^{2}}%
  {2 \ln 2}.
\end{equation}

From \eqref{eq:expectProv_f(a)}, \eqref{eq:expectProv_Row0(T)} we have
$\expectProv{f(a)}{T_2} = \expectProv{\Row_0(T)}{T_2}$. Using \eqref{eq:3} 
and \eqref{eq:8}, we derive
\begin{equation}
  \label{eq:9}
  \begin{aligned}
  \expectProv{\abs{\Row_0(T) - \Row_{1}(T)}}{T_{2}}
  &= \expectProv{|2(1-\lambda)\expectProv{f(a)}{T_2} - 2 \lambda \expectProv{f(a)}{T_2}|}{T_2}\\
  &= 2 \expectProv{|1 - 2\lambda|}{T_2} \expectProv{f(a)}{T_2}\\
  & \leqslant 2 \sqrt{\delta'} \expectProv{\Row_0(T)}{T_2}.
  \end{aligned}
\end{equation}
with
\begin{equation}
  \label{eq:12}
  \delta' \coloneqq
  \left(
    \delta + O \genfrac(){}{}{\log \ell}{\ell}
  \right) \ln 2.
\end{equation}

We now globalize to prove \eqref{eq:onesided_estimation2}:
\begin{align}
&\expectedValue{|\Row_0(T) - \Row_1(T)| \Col_0(T) I_{\ErowBig(T)}}\\
&= \expectedValue{\expectProv{|\Row_0(T) - \Row_1(T)| \Col_0(T) I_{\ErowBig(T)}}{T_2}}\\
&= \expectedValue{\expectProv{|\Row_0(T) - \Row_1(T)| I_{\ErowBig(T)}}{T_2}\Col_0(T)}\\
&\leqslant \expectedValue{2 \sqrt{\delta'} \expectProv{\Row_0(T)}{T_2} \Col_0(T)}\\
&= 2 \sqrt{\delta'} \expectedValue{\Row_0(T) \Col_0(T)}
\end{align}
We require \(\frac{\epsilon}{2} = 2 \sqrt{\delta'}\), from which we express \(\delta\) in terms 
of \(\epsilon\) using \eqref{eq:12}:
\begin{gather}
  \label{eq:18}
  \delta =
  \frac{\delta'}{\ln 2} - O \genfrac(){}{}{\log \ell}{\ell}
  =
  \frac{\epsilon^{2}}{16 \ln 2}
  - O \genfrac(){}{}{\log \ell}{\ell}
\end{gather}
This concludes the proof of \eqref{eq:onesided_estimation2}. Equation 
\eqref{eq:onesided_estimation1} follows by exchanging rows and columns.
\medskip

\noindent \emph{Step 4: Error estimation in the ``small'' case.} Suppose that for some
given $T$, $\Esmall(T)$ holds because $\ErowBig(T)$ does not hold (the argument is 
similar in case $\EcolBig(T)$ does not hold). Then, using \eqref{eq:21},
\begin{equation}
\Row_0(T)
\leqslant 
\Row_0(T) + \Row_1(T)
=
2 \expectProv{f(a)}{T}.
\end{equation}
Thus
\begin{align}
\Row_0(T) \Col_0(T) 
&\leqslant 2 \expectProv{f(a)}{T} \cdot \expectProv{g(b)}{T, i \notin b}\\
&\leqslant 2^{- \delta \ell} \maxNorm{f(a) \restriction \binom{[n] \setminus T_2}{\ell}} \cdot \maxNorm{g(b) \restriction \binom{T_2}{\ell}}\\
&\leqslant 2^{-\delta \ell} \maxNorm{f(a)g(b) \restriction (A \cup B)}
\end{align}
This is easily seen to imply \eqref{eq:small_estimation}.
\end{proof}

\subsection{Lower Bounds for Shifts of Unique Disjointness}

Now we apply Lemma~\ref{lem:Razborov_lemma} to show that
the nonnegative rank (and hence the communication complexity
in expectation) of any shifted version of the unique disjointness 
matrix remains high. More precisely, let \(M \in \R_+^{2^n \times 2^n}\); 
for convenience we index the rows and columns with elements in \(\binSet^n\). 
We say that \(M\) is a \emph{\(\rho\)-extension of UDISJ}, if \(M_{ab} = \rho\)
whenever \(\card{a \cap b} = 0\) and \(M_{ab} = \rho - 1\)  whenever
\(\card{a \cap b} = 1\) with \(a,b \in \binSet^n\).
Note that for these pairs \(M\) has exclusively positive entries whenever \(\rho >1\).
For \(\rho = 1\) a nonnegative rank of \(2^{\Omega(n)}\)
was already shown in \cite{extform4} via nondeterministic communication
complexity. We now extend this result for a wide range of \(\rho\)
using Lemma~\ref{lem:Razborov_lemma}. 

\begin{thm}[Nonnegative rank of UDISJ shifts]
\label{thm:highNNRPert}
Let \(M \in \R_+^{2^n \times 2^n}\) be a \(\rho\)-extension of
UDISJ as above.  
\begin{enumerate}
\item\label{item:1}
If \(\rho\) is a fixed constant, then \(\nnegrk(M) = 2^{\Omega(n)}\). 
\item\label{item:2}
If \(\rho = O(n^\beta)\) for some constant \(\beta < 1/2\) then
  $\nnegrk(M) = 2^{\Omega(n^{1 - 2 \beta})}$.
\end{enumerate}
\begin{proof}
Without loss of generality, assume $n \equiv 3 \pmod{4}$. 
Let $r = \nnegrk(M)$. Regarding the $2^n \times 2^n$ matrix $M$ as a 
function from $2^{[n]} \times 2^{[n]}$ to $\R$, we can write $M = 
\sum_{i=1}^r X_i$ where $X_i(a,b) = f_i(a)g_i(b)$ for some nonnegative
functions $f_i$ and $g_i$ defined over $2^{[n]}$. Then
\[
\expectProv{M}{A} = \rho
\quad
\text{and}
\quad
\expectProv{M}{B} = \rho - 1.  
\]
On the other hand, by applying Lemma~\ref{lem:Razborov_lemma}
to each $i \in [r]$ and
summing up all equations we find
\begin{align}
(1-\epsilon) \expectProv{M}{A} - \expectProv{M}{B}
&\leqslant \sum_{i=1}^r \maxNorm{X_i \restriction (A \cup B)} 2^{-\frac{\epsilon^{2}}{16 \ln 2} \ell + O(\log \ell)}\\
&\leqslant r \maxNorm{M \restriction (A \cup B)} 2^{-\frac{\epsilon^{2}}{16 \ln 2} \ell + O(\log \ell)}
\end{align}
where $\ell = \frac{n+1}{4}$ as before. By plugging in the values of $\expectProv{M}{A}$,
$\expectProv{M}{B}$ and $\maxNorm{M \restriction (A \cup B)}$, we get
\begin{align*}
  (1-\epsilon) \rho - \rho + 1
  &\leqslant r \cdot \rho \cdot
  2^{-\frac{\epsilon^{2}}{16 \ln 2} \ell + O(\log \ell)},
  \intertext{which provides the lower bound}
  r &\geqslant \left(
    \frac{1}{\rho} - \epsilon
  \right)
  2^{
    \frac{\epsilon^{2}}{16 \ln 2} \ell - O(\log \ell)}.
\end{align*}

If \(\rho\) is constant, this last expression is $2^{\Omega(n)}$ provided
\(\epsilon\) is chosen sufficiently close to \(0\). This proves part 
\ref{item:1} of the theorem.

If \(\rho \leqslant C n^{\beta}\) for some positive constant $C$,
then we can take $\epsilon = \frac{1}{2C n^{\beta}}$.  Thus
\(
  \frac{1}{\rho} - \epsilon \geqslant \frac{1}{2C n^{\beta}}
= \Omega(n^{-\beta})
\).
This leads to the lower bound
\(
  r \geqslant 2^{\Omega(n^{1 - 2 \beta})}
\)
as claimed in part \ref{item:2}. 
\end{proof}
\end{thm}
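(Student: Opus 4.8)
The plan is to combine the optimal nonnegative factorization of $M$ with the additive, $\maxNorm{\cdot}$-normalized form of Lemma~\ref{lem:Razborov_lemma}. First I would reduce to the case $n \equiv 3 \pmod{4}$ (the general case changes $n$ by at most a constant, which is absorbed by the $\Omega(\cdot)$), so that the sets $A$, $B$ and the distribution $\mu$ from Section~\ref{sec:Razborov-lemma} are available. Set $r \coloneqq \nnegrk(M)$ and fix a rank-$r$ nonnegative factorization $M = \sum_{i=1}^{r} X_i$ with $X_i(a,b) = f_i(a)g_i(b)$ and $f_i,g_i \geqslant 0$. Since $M$ is a $\rho$-extension of UDISJ, a one-line computation gives $\expectProv{M}{A} = \rho$ and $\expectProv{M}{B} = \rho - 1$, and all entries of $M$ on $A \cup B$ lie in $\{\rho-1,\rho\}$, so $\maxNorm{M \restriction (A \cup B)} = \rho$. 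Nonnegativity forces $0 \leqslant X_i(a,b) \leqslant M(a,b)$ pointwise, hence $\maxNorm{X_i \restriction (A\cup B)} \leqslant \rho$ for every $i$.

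Next I would apply Lemma~\ref{lem:Razborov_lemma} to each pair $(f_i,g_i)$ with a \emph{common} parameter $\epsilon \in (0,1)$ and sum the $r$ resulting inequalities. By linearity of expectation the left-hand side collapses to $(1-\epsilon)\expectProv{M}{A} - \expectProv{M}{B} = (1-\epsilon)\rho - (\rho-1) = 1 - \epsilon\rho$, independent of $r$, while the right-hand side is at most $r \cdot \maxNorm{M \restriction (A\cup B)} \cdot 2^{-\frac{\epsilon^2}{16\ln 2}\ell + O(\log \ell)} = r\,\rho\,2^{-\frac{\epsilon^2}{16\ln 2}\ell + O(\log\ell)}$, where $\ell = (n+1)/4$. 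Rearranging yields the master estimate
\[
r \;\geqslant\; \Bigl(\tfrac{1}{\rho} - \epsilon\Bigr)\, 2^{\frac{\epsilon^2}{16\ln 2}\ell - O(\log\ell)}.
\]

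It then remains to tune $\epsilon$. For part~\ref{item:1}, with $\rho$ a fixed constant, take $\epsilon$ a small enough constant (e.g.\ $\epsilon = 1/(2\rho)$) so that $1/\rho - \epsilon$ is a positive constant and $\frac{\epsilon^2}{16\ln 2}\ell = \Omega(n)$ dominates the additive $O(\log\ell) = O(\log n)$ loss; this gives $r = 2^{\Omega(n)}$. For part~\ref{item:2}, with $\rho \leqslant Cn^{\beta}$ and $\beta < 1/2$, take $\epsilon = \frac{1}{2Cn^{\beta}}$; then $\tfrac1\rho - \epsilon \geqslant \frac{1}{2Cn^{\beta}} = \Omega(n^{-\beta})$ and $\frac{\epsilon^2}{16\ln 2}\ell = \Theta(n^{1-2\beta})$. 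Since $\beta < 1/2$, the exponent $\Theta(n^{1-2\beta})$ dominates both the $O(\log n)$ loss and $\log$ of the polynomial prefactor $\Omega(n^{-\beta})$ (which is $-\beta\log n = o(n^{1-2\beta})$), so the product is still $2^{\Omega(n^{1-2\beta})}$.

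All of this is routine arithmetic; the one genuinely delicate point has already been handled in Lemma~\ref{lem:Razborov_lemma}, namely that the extended rectangle corruption lemma is stated in exactly the right \emph{additive}, $\maxNorm{\cdot}$-normalized shape. That is what makes the summation over the $r$ rank-one summands lossless: the left-hand sides add up to the fixed gap $1-\epsilon\rho$ regardless of $r$, while each right-hand side contributes only $\rho$ times an exponentially small factor. In Razborov's original probabilistic single-rectangle formulation this telescoping is unavailable, which is precisely why the generalization of Section~\ref{sec:Razborov-lemma} was needed.
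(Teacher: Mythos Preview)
Your proposal is correct and follows essentially the same approach as the paper: decompose $M$ into its $r$ nonnegative rank-one summands, apply Lemma~\ref{lem:Razborov_lemma} to each with a common $\epsilon$, sum, and then choose $\epsilon$ constant for part~\ref{item:1} and $\epsilon = \Theta(n^{-\beta})$ for part~\ref{item:2}. Your write-up is in fact slightly more explicit in justifying $\maxNorm{X_i \restriction (A\cup B)} \leqslant \maxNorm{M \restriction (A\cup B)}$ via the pointwise bound $0 \leqslant X_i \leqslant M$.
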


\section{Polyhedral Inapproximability of CLIQUE and SDPs}
\label{sec:consequences}

We will now use Theorem~\ref{thm:highNNRPert} in combination 
with Theorem~\ref{thm:framework} to lower bound the sizes of
approximate EFs for CLIQUE and some SDPs.
First, we pinpoint a pair $P,Q$ of nested
polyhedra that will be the source of our polyhedral inapproximability
results. Second, we give a faithful linear encoding of CLIQUE and 
prove strong lower bounds on the sizes of approximate EFs for 
CLIQUE w.r.t.\ this encoding. Third, we focus on approximations
of SDPs by LPs.

\subsection{A Hard Pair}
\label{sec:hard_pair}

Let $n$ be a positive integer. The \emph{correlation polytope} 
$\COR(n)$ is defined as the convex hull of all the $n \times n$ 
rank-$1$ binary matrices of the form $bb^\intercal$ where $b \in \{0,1\}^n$. 
In other words, 
\[
\COR(n)  = \conv{\set{bb^\intercal}{b \in \binSet^n}}.
\]
This will be our inner polytope $P$. Next, let  
\[
Q = Q(n) \coloneqq \{x \in \RR^{n \times n} \mid \inp{2\diag(a)-aa^\intercal}{x} \leqslant 1,\ a \in \{0,1\}^n\},
\]
where $\inp{\cdot}{\cdot}$ denotes the Frobenius inner product. 
This will be our outer polyhedron $Q$. 

Then the following is known, see \citep{extform4}. First, 
$P \subseteq Q$. Second, denoting by $S^{P,Q}$ the slack 
matrix of the pair $P,Q$, we have
\(
S^{P,Q}_{ab} = (1-a^\intercal b)^2
\). 
Thus, for $\rho \geqslant 1$, we have $S^{P,\rho Q}_{ab} 
= (1-a^\intercal b)^2 + \rho - 1$. Observe that the matrix 
\(S^{P,\rho Q}\) is a \(\rho\)-extension of UDISJ and therefore 
has high nonnegative rank via Theorem~\ref{thm:highNNRPert}; moreover
it has positive entries everywhere for \(\rho > 1\).
Together with Theorem~\ref{thm:sandwich} this implies that every
polyhedron sandwiched between $P = \COR(n)$ and $\rho Q$ has large extension complexity. We obtain the following theorem.

\begin{thm}[Lower bounds for approximate EFs of the hard pair]
\label{thm:corInapprox}
Let $\rho \geqslant 1$, let $n$ be a positive integer and let
$P = \COR(n)$, $Q = Q(n)$ be as above.  Then the following hold:
\begin{enumerate}
\item\label{item:01}
If \(\rho\) is a fixed constant, then \(\xc(P,\rho Q) = 2^{\Omega(n)}\).
\item\label{item:02}
If \(\rho = O(n^\beta)\) for some constant \(\beta < 1/2\), then
$\xc(P,\rho Q) = 2^{\Omega(n^{1 - 2 \beta})}$.
\end{enumerate}
\end{thm}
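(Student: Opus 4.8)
The plan is to combine the two main ingredients assembled in the excerpt: the sandwich factorization theorem (Theorem~\ref{thm:sandwich}) and the nonnegative-rank lower bound for shifted unique-disjointness matrices (Theorem~\ref{thm:highNNRPert}). The bridge between them is the identification, recalled just before the statement, that $S^{P,\rho Q}$ is a $\rho$-extension of UDISJ. So the proof is essentially a bookkeeping argument that threads these facts together, taking care of the ``$-1$'' slack in Theorem~\ref{thm:sandwich}.

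Concretely, I would proceed as follows. First, recall from \citep{extform4} that $P = \COR(n) \subseteq Q = Q(n)$ and that the slack matrix satisfies $S^{P,Q}_{ab} = (1 - a^\intercal b)^2$; hence $S^{P,\rho Q}_{ab} = (1 - a^\intercal b)^2 + \rho - 1$, which equals $\rho$ when $|a \cap b| = 0$ and $\rho - 1$ when $|a \cap b| = 1$ (and is even larger otherwise). Thus $S^{P,\rho Q}$ is a $\rho$-extension of UDISJ in the sense of the definition preceding Theorem~\ref{thm:highNNRPert}. Second, apply Theorem~\ref{thm:highNNRPert}: in case~\ref{item:1}, for $\rho$ a fixed constant, $\nnegrk(S^{P,\rho Q}) = 2^{\Omega(n)}$; in case~\ref{item:2}, for $\rho = O(n^\beta)$ with $\beta < 1/2$, $\nnegrk(S^{P,\rho Q}) = 2^{\Omega(n^{1-2\beta})}$. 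Third, invoke Theorem~\ref{thm:sandwich} for the pair $P, \rho Q$, which gives $\xc(P, \rho Q) \geqslant \nnegrk(S^{P,\rho Q}) - 1$. Since subtracting $1$ from an exponentially large quantity leaves it exponentially large (and $2^{\Omega(n^{1-2\beta})} - 1 = 2^{\Omega(n^{1-2\beta})}$), the claimed lower bounds on $\xc(P, \rho Q)$ follow in both cases.

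There is essentially no serious obstacle here, since all the real work has been front-loaded into Theorems~\ref{thm:sandwich} and~\ref{thm:highNNRPert}. The only minor points requiring a word of care are: (a) confirming that Theorem~\ref{thm:highNNRPert} applies verbatim even though $S^{P,\rho Q}$ has entries strictly larger than $\rho$ on pairs with $|a \cap b| \geqslant 2$ — but the lemma's hypothesis only constrains the entries on $A \cup B$ (the $\ell$-subset pairs with intersection $0$ or $1$) and bounds $\maxNorm{M \restriction (A \cup B)}$, so the larger off-promise entries are irrelevant; and (b) noting the WLOG reduction $n \equiv 3 \pmod 4$ needed for Lemma~\ref{lem:Razborov_lemma} is already handled inside the proof of Theorem~\ref{thm:highNNRPert}, so it need not be repeated. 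One could also optionally remark that, because $\COR(n)$ and $Q(n)$ are full-dimensional with $Q(n)$ bounded in the relevant directions (or simply by not caring about the additive $1$), the estimate is tight up to the constant, but this is not needed for the stated conclusion.

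Thus the proof is a three-line deduction: rewrite the slack matrix as a $\rho$-extension of UDISJ, apply Theorem~\ref{thm:highNNRPert} to lower bound its nonnegative rank, and apply Theorem~\ref{thm:sandwich} to transfer that bound to $\xc(P, \rho Q)$, absorbing the harmless $-1$.
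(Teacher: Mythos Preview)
Your proposal is correct and matches the paper's approach exactly: the paper also derives the theorem by observing that $S^{P,\rho Q}_{ab} = (1-a^\intercal b)^2 + \rho - 1$ is a $\rho$-extension of UDISJ, invoking Theorem~\ref{thm:highNNRPert} for the nonnegative-rank lower bound, and then Theorem~\ref{thm:sandwich} to transfer this to $\xc(P,\rho Q)$. Your extra care about the $-1$ and the off-promise entries is fine (and indeed the definition of $\rho$-extension already imposes no constraint for $|a \cap b| \geqslant 2$, so point~(a) is not even an issue at the level of Theorem~\ref{thm:highNNRPert}'s hypothesis).
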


\subsection{Polyhedral Inapproximability of CLIQUE}
\label{sec:CLIQUE}

We define a 
natural linear encoding for the maximum clique problem (CLIQUE)
as follows.
Let $n$ denote the number of vertices of the input graph.
We define a $d = n^2$ dimensional encoding.
The variables are denoted by
$x_{ij}$ for $i, j \in [n]$. Thus $x \in \RR^{n \times n}$. The 
interpretation is that a set of vertices $X$ is encoded by
$x_{ij} = 1$ if $i, j \in X$ and $x_{ij} = 0$ otherwise.
Note that \(X = \{ i : x_{ii} = 1\}\)
can be recovered from only the diagonal variables.
This defines the set $\mathcal{L} \subseteq \{0,1\}^*$ of feasible
solutions. Notice that $x \in \{0,1\}^{n \times n}$ is feasible if 
and only if it is of the form $x = bb^\intercal$ for some 
$b \in \{0,1\}^n$, the characteristic vector of \(X\).
Thus we have $P = \COR(n)$
for the inner polytope.

The admissible objective functions are chosen as follows to encode
the CLIQUE problem for graphs \(G\) supported on \([n]\).
Given a graph $G$ such that $V(G) \subseteq [n]$, we let 
$w_{ii} \coloneqq 1$ for $i \in V(G)$, $w_{ii} \coloneqq 0$ for $i \in 
[n] \setminus V(G)$, $w_{ij} = w_{ji} \coloneqq -1$ when $ij$ is 
a non-edge of $G$ (that is, $i, j \in V(G)$, $i \neq j$ and $ij \notin E(G)$), and $w_{ij} = w_{ji} \coloneqq 0$ otherwise. 
We denote the resulting weight vector by $w^G$. Notice 
that for a graph $G$ with $V(G) = [n]$, we have $w^G = 
I - A(\overline{G})$ where $I$ is the $n \times n$ identity matrix, 
$A(\overline{G})$ is the adjacency matrix of the complement of $G$.

A feasible solution $x = bb^\intercal \in \{0,1\}^{n \times n}$ 
maximizes $\inp{w^G}{x}$ only if $b$ is the characteristic vector 
(or incidence vector) of a clique of $G$. Indeed, if $b = \chi^X$ 
and $ij$ is a non-edge of $G$ with $i, j \in X$ then removing $i$ 
or $j$ from $X$ increases $\inp{w^G}{x}$. Moreover, the maximum of 
$\inp{w^G}{x}$ over $x \in \{0,1\}^{n \times n}$ feasible is the clique 
number $\omega(G)$.

The admissible objective functions are
the ones of the form \(w^{G}\), i.e.,
$\mathcal{O} = \{w^{G} : V(G) \subseteq [n]\}$
is the set of admissible functions.
Therefore, $(\mathcal{L},\mathcal{O})$ defines a valid
linear encoding of CLIQUE. We denote the outer convex set of this linear encoding 
by $Q^{\mathrm{all}}$. It is actually the polyhedron defined as
$Q^{\mathrm{all}} = \{x \in \RR^{n \times n} \mid \forall$ graphs 
$G$ s.t.\ $V(G) \subseteq [n] : \inp{w^G}{x} \leqslant \omega(G),\
\forall i \neq j \in [n] : x_{ij} \geqslant 0\}$. We will now show
that \(Q^{\mathrm{all}} \subseteq Q\).

\begin{lem}
  Let \(Q^{\mathrm{all}},Q\) be as above, then \(Q^{\mathrm{all}}
  \subseteq Q\).
  \begin{proof}
Let $x \in Q^{\mathrm{all}}$. We want to prove that $x$ satisfies 
all the constraints defining $Q$. We show this by restricting
to graphs $G$ with $\omega(G) = 1$. For a given $a \in \{0,1\}^n$, 
let \(G\) be the graph with $\chi^{V(G)} = a$ and $E(G) = \emptyset$. Then,
\[
\inp{2\diag(a) - aa^T}{x} 
= \inp{w^G}{x} \leqslant \omega(G) = 1.
\]
The lemma follows.
\end{proof}
\end{lem}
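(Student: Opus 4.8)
The plan is to show that every defining inequality of $Q$ is implied by the constraints defining $Q^{\mathrm{all}}$, so that $Q^{\mathrm{all}} \subseteq Q$. Recall that $Q$ is cut out by the inequalities $\inp{2\diag(a) - aa^\intercal}{x} \leqslant 1$, one for each $a \in \{0,1\}^n$. So it suffices to exhibit, for each such $a$, an admissible objective function $w^G \in \mathcal{O}$ whose associated inequality $\inp{w^G}{x} \leqslant \omega(G)$ coincides with (or implies) $\inp{2\diag(a) - aa^\intercal}{x} \leqslant 1$.

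First I would fix $a \in \{0,1\}^n$ and take $G$ to be the graph on vertex set $V(G) = \{i : a_i = 1\}$ with \emph{no} edges at all, i.e.\ $E(G) = \emptyset$. I would then compute $w^G$ from the definition: $w^G_{ii} = 1$ for $i \in V(G)$ and $0$ otherwise, while $w^G_{ij} = w^G_{ji} = -1$ for every pair $i \neq j$ of vertices of $G$ (since every such pair is a non-edge), and $0$ for all other off-diagonal entries. Writing this as a matrix, $w^G = 2\diag(a) - aa^\intercal$: the diagonal of $aa^\intercal$ is $a$ itself, so $2\diag(a) - aa^\intercal$ has diagonal entries $2a_i - a_i = a_i$, matching $w^G_{ii}$; and its off-diagonal $(i,j)$ entry is $-a_i a_j$, which is $-1$ exactly when $i,j$ are both in $V(G)$ and $0$ otherwise, matching $w^G_{ij}$. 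Since $G$ has no edges, its clique number is $\omega(G) = 1$ (assuming $V(G) \neq \emptyset$; the empty case is trivial as the inequality reads $0 \leqslant 1$). Therefore the constraint $\inp{w^G}{x} \leqslant \omega(G)$ from the definition of $Q^{\mathrm{all}}$ is literally the constraint $\inp{2\diag(a) - aa^\intercal}{x} \leqslant 1$ defining $Q$.

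Finally I would conclude: given $x \in Q^{\mathrm{all}}$, the argument above shows $x$ satisfies every inequality $\inp{2\diag(a) - aa^\intercal}{x} \leqslant 1$, hence $x \in Q$; thus $Q^{\mathrm{all}} \subseteq Q$. There is no real obstacle here — the only thing to be slightly careful about is the bookkeeping of the matrix identity $w^G = 2\diag(a) - aa^\intercal$ and noting that edgeless graphs (on a nonempty vertex set) have clique number exactly $1$, so that the right-hand sides also match. The inclusion is one-directional and no converse is claimed, which is all that is needed downstream: combined with $P = \COR(n) \subseteq Q^{\mathrm{all}} \subseteq Q$ and the scaling relation between slack matrices, any lower bound on $\xc(P, \rho Q)$ transfers to a lower bound on $\xc(P, \rho Q^{\mathrm{all}})$, i.e.\ on approximate EFs for CLIQUE in this encoding.
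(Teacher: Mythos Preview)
Your proof is correct and follows essentially the same approach as the paper: for each $a \in \{0,1\}^n$ you take the edgeless graph $G$ with $\chi^{V(G)} = a$, so that $w^G = 2\diag(a) - aa^\intercal$ and $\omega(G) = 1$, recovering the defining inequality of $Q$ as a special case of the $Q^{\mathrm{all}}$ constraints. Your version is a bit more explicit in verifying the matrix identity and in handling the degenerate case $a = \mathbf{0}$, but the argument is the same.
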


Because $Q^{\mathrm{all}}$ is contained in the polyhedron $Q$ 
defined above, every $K$ satisfying $P \subseteq K \subseteq 
\rho Q^{\mathrm{all}}$ also satisfies
$P \subseteq K \subseteq \rho Q$.
Hence, Theorem~\ref{thm:corInapprox} yields the following 
result. 

\begin{thm}[Polyhedral inapproximability of CLIQUE]
\label{thm:CLIQUE}
W.r.t.\ the linear encoding defined above,
CLIQUE has an $O(n^2)$-size $n$-approximate EF.
Moreover, every $n^{1/2-\epsilon}$-approximate EF of CLIQUE
has size $2^{\Omega(n^{2\epsilon})}$, for all $0 < \epsilon < 1/2$.
\begin{proof}
The $n$-approximate EF of CLIQUE is trivial: it is defined by the 
system $\mathbf{0} \leqslant x \leqslant \mathbf{1}$, or in slack form 
$x - y = \mathbf{0}$, $x + z = \mathbf{1}$, $y \geqslant 
\mathbf{0}$, $z \geqslant \mathbf{0}$. We claim that this
defines a $n$-approximate EF of CLIQUE of size $2n^2$. Indeed,
letting $K = [0,1]^{n \times n}$ denote the polytope defined by this EF,
we have $P \subseteq K$. Moreover, $\max \{\inp{w}{x} \mid x \in K\}
\leqslant n \leqslant n \cdot \max \{\inp{w}{x} \mid x \in P\}$ for all 
admissible objective functions $w$ of dimension $n \times n$ with a nonzero
diagonal. In case an admissible $w$ has $w_{ii} = 0$ for all $i \in [n]$, 
we have $\max \{\inp{w}{x} \mid x \in K\} = 0 = \max \{\inp{w}{x} \mid x \in P\}$. Our claim and the first part of the theorem follows. 

The second part of the theorem follows directly from 
Theorem \ref{thm:corInapprox} and the fact that $Q^{\mathrm{all}} 
\subseteq Q$.
\end{proof}
\end{thm}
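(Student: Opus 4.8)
The plan is to prove Theorem~\ref{thm:CLIQUE} in two independent halves, exactly matching the two sentences of the statement: an upper bound exhibiting a small approximate EF, and a lower bound transferring the hardness of the pair $(\COR(n),Q)$ through the inclusion $Q^{\mathrm{all}} \subseteq Q$.

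For the upper bound, I would take the trivial box $K = [0,1]^{n\times n}$, which obviously has an EF of size $2n^2$ (the constraints $\mathbf 0 \leqslant x \leqslant \mathbf 1$, written in slack form with nonnegative slack variables $y,z$). The inclusion $P = \COR(n) \subseteq K$ is immediate since every vertex $bb^\intercal$ of $P$ lies in the cube. The point to check is the approximation guarantee $K \subseteq n\,Q^{\mathrm{all}}$, equivalently $\max\{\inp{w^G}{x} : x \in K\} \leqslant n \cdot \omega(G)$ for every admissible $w^G$. Since $\omega(G) \geqslant 1$ whenever $V(G)\neq\emptyset$, and the maximum of $\inp{w^G}{x}$ over the cube is at most $\sum_{i} \max(w^G_{ii},0) = \size{V(G)} \leqslant n$ (the negative off-diagonal entries only hurt, so the maximizer sets them implicitly to zero via the diagonal; more carefully, $\inp{w^G}{x} \leqslant \sum_i x_{ii} \leqslant n$ on $K$ because dropping the nonpositive terms and using $x_{ii}\leqslant 1$), we get the factor-$n$ bound. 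The degenerate case $w^G \equiv 0$ on the diagonal (i.e.\ $V(G)=\emptyset$) needs $\max_K \inp{w^G}{x} = 0 = \max_P \inp{w^G}{x}$, which holds because then $w^G$ has only nonpositive entries and $x=\mathbf 0 \in P \cap K$.

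For the lower bound, the work is essentially already done upstream. The linear encoding has been set up so that the inner polytope is $P = \COR(n)$, and the Lemma just proved gives $Q^{\mathrm{all}} \subseteq Q = Q(n)$. Therefore any polyhedron $K$ with $P \subseteq K \subseteq \rho\,Q^{\mathrm{all}}$ also satisfies $P \subseteq K \subseteq \rho\,Q$, so the minimum size of a $\rho$-approximate EF of CLIQUE (w.r.t.\ this encoding) is at least $\xc(P,\rho Q)$. Now I invoke Theorem~\ref{thm:corInapprox}\ref{item:02} with $\beta = 1/2 - \epsilon$: a $\rho$-approximate EF with $\rho = n^{1/2-\epsilon}$ has size at least $\xc(P,\rho Q) = 2^{\Omega(n^{1-2\beta})} = 2^{\Omega(n^{2\epsilon})}$. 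Combined with Theorem~\ref{thm:framework} (which tells us the minimum EF size and $\xc(P,\rho Q)$ agree up to an additive constant), this yields the claimed $2^{\Omega(n^{2\epsilon})}$ lower bound for every $0 < \epsilon < 1/2$.

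I do not anticipate a serious obstacle here — the theorem is a packaging of the earlier machinery — but the one place demanding a little care is the verification of the approximation ratio in the upper bound, specifically making sure the inequality $\max_K \inp{w^G}{x} \leqslant n\,\omega(G)$ is argued correctly for all admissible $w^G$ simultaneously (including the borderline cases where $G$ has isolated vertices or no vertices at all), since the objective functions can have negative entries and one must confirm the cube maximizer never needs them. Everything else is a direct citation of Theorem~\ref{thm:corInapprox} and the containment lemma.
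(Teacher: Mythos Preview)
Your proposal is correct and follows the same approach as the paper: the upper bound uses the cube $K=[0,1]^{n\times n}$ with the estimate $\inp{w^G}{x}\leqslant \sum_i x_{ii}\leqslant n$ (off-diagonal terms being nonpositive) together with $\omega(G)\geqslant 1$ for nonempty $V(G)$, and the lower bound is exactly the reduction via $Q^{\mathrm{all}}\subseteq Q$ to Theorem~\ref{thm:corInapprox}\ref{item:02} with $\beta=1/2-\epsilon$. Your treatment of the degenerate case and your explicit mention of Theorem~\ref{thm:framework} are, if anything, slightly more careful than the paper's own write-up.
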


\subsection{Polyhedral Inapproximability of SDPs}
\label{sec:sandw-spectr}

In this section we show that there exists a spectrahedron
with small semidefinite extension complexity
but high approximate extension complexity;
i.e., any sufficiently fine polyhedral approximation is large.
This indicates that in general it is not 
possible to approximate SDPs arbitrarily well using small LPs, so 
that SDPs are indeed a much stronger class of optimization
problems. (The situation looks quite different for SOCPs, see 
\cite{Ben-TalNemirovski01}.) The result follows from 
Theorem~\ref{thm:corInapprox} and \cite{extform4}. 

We denote the vector space of all $r \times r$ symmetric matrices
by $\SYM^r$ and the cone of all $r \times r$ symmetric positive 
semidefinite matrices (shortly, the PSD cone) by $\SDP^r$.
A \emph{semidefinite EF} of a convex set $S \subseteq \RR^d$ is
a linear system $\inp{E_i}{x} + \inp{F_i}{Y} = g_i$ ($i \in [k]$), 
$Y \in \SDP^r$ where $E_i \in \RR^d$ and $F_i \in \SYM^r$, such 
that $x \in S$ if and only if $\exists Y \in \SDP^r$ with 
$\inp{E_i}{x} + \inp{F_i}{Y} = g_i$ for all $i \in [m]$. 
Thus a convex set admits a semidefinite EF if and only
if it is a spectrahedron. The \emph{size} of the semidefinite 
EF $\inp{E_i}{x} + \inp{F_i}{Y} = g_i$ ($i \in [k]$), $Y \in \SDP^r$ 
is simply $r$. The \emph{semidefinite extension complexity} 
of a spectrahedron $S \subseteq \RR^d$ is the minimum size 
of a semidefinite EF of $S$. This is denoted by $\xcp(S)$. 
A \emph{rank-\(r\) PSD-factorization} of a nonnegative matrix 
\(M \in \R^{m \times n}\) is given by matrices
\(T_1, \ldots, T_m \in \SDP^r\) and \(U^1, \ldots, U^n \in \SDP^r\),
so that \(M_{ij} =\inp{T_i}{U^j}\);
the \emph{PSD-rank of \(M\)} is the smallest \(r\) such that
there exists such a factorization.
Yannakakis's factorization theorem can be generalized to the
SDP-case (see \cite{GouveiaParriloThomas2011}), i.e., the 
semidefinite extension complexity of a pair of polyhedra $P, Q$
equals the PSD-rank of any associated slack matrix, in most 
cases (e.g., \cite{GRT13} prove the 
equality under the hypothesis that $Q$ does not contain 
any line).

Let $P = \COR(n)$ be the correlation polytope and
\(Q = Q(n) \subseteq \R^{n \times n}\) be the polyhedron 
defined above in Section~\ref{sec:hard_pair}. Although every 
polyhedron $K$ sandwiched between $P$ and $Q$ has 
super-polynomial extension complexity, and by 
Theorem~\ref{thm:corInapprox} this even applies 
to polyhedra sandwiched between $P$ and $\rho Q$ 
for $\rho = O(n^{1/2-\epsilon})$, there exists a spectrahedron
\(S\) sandwiched between $P$ and $Q$ with small semidefinite 
extension complexity. 

\begin{lem}[Existence of spectrahedron]
\label{lem:sandw-spectr}
Let $n$ be a positive integer and let $P  = \COR(n)$, $Q = Q(n)$ 
be as above. Then there exists a spectrahedron $S$ in 
$\RR^{n \times n}$ with $P \subseteq S \subseteq Q$ 
and $\xcp(S) \leqslant n+1$.
\begin{proof}
For \(a,b \in \binSet^{n}\), the matrices \(T^{a},U_b \in \SDP^{n+1}\) 
defined in \eqref{eq:PSD_factorization} satisfy $\inp{T^{a}}{U_{b}} 
= (1 - a^\intercal b)^2$. Let $M = M(n) \in \RR^{2^n \times 2^n}$
be the matrix defined as $M_{ab} = (1 - a^\intercal b)^2$. 
The matrix $M$ is an $O(n^2)$-rank nonnegative matrix
extending the UDISJ
matrix, and also the slack matrix of the pair $P$, $Q$. 
Then $M_{ab} = \inp{T_a}{U^b}$ is a rank-$(n+1)$ PSD-factorization 
of $M$. 

Consider the system 
\begin{equation}
\label{eq:semidefinite_EF}
\inp{2\diag(a)-aa^\intercal}{x} + \inp{T_a}{Y} = 1\quad 
(a \in \{0,1\}^n), \qquad Y \in \SDP^{n+1}
\end{equation}
and $S$ be the projection to $\RR^{n \times n}$ of the 
pairs $(x,Y) \in \RR^{n \times n} \times \SYM^{n+1}$ 
satisfying \eqref{eq:semidefinite_EF}. 

First observe that \(S \subseteq Q\): since \(T_{a} \in 
\SDP^{n+1}\) for all \(a \in \binSet^n\) and $Y \in 
\SDP^{n+1}$ we have \(\inp{T_a}{Y} \geqslant 0\) 
and thus \(\inp{2\diag(a)-aa^\intercal}{x} \leqslant 1\) 
holds for all \(x \in S\). 

In order to show that \(P \subseteq S\) recall that $M$ is the
slack matrix of the pair $P$, $Q$. Therefore, for each vertex 
$x \coloneqq bb^\intercal$ of $P$, we can pick \(Y \coloneqq U^{b}\)
from the factorization such that $\inp{2\diag(a)-aa^\intercal}{x} + 
\inp{T_a}{Y} = 1$ and $Y \in \SDP^{n+1}$. It follows that
\(P \subseteq S\). 
\end{proof}
\end{lem}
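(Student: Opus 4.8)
The plan is to write down the spectrahedron $S$ explicitly using the rank-$(n+1)$ PSD factorization of the slack matrix of the pair $P,Q$ that is already in hand from \eqref{eq:PSD_factorization}. Recall from \cite{extform4} and the discussion in Section~\ref{sec:hard_pair} that the slack matrix $M$ of the pair $P = \COR(n)$, $Q = Q(n)$ has entries $M_{ab} = (1 - a^\intercal b)^2$, and that the rank-$1$ matrices $T_a = \binom{-1}{a}\binom{-1}{a}^\intercal$ and $U^b = \binom{1}{b}\binom{1}{b}^\intercal$, which lie in $\SDP^{n+1}$, satisfy $\inp{T_a}{U^b} = (a^\intercal b - 1)^2 = M_{ab}$; thus $\bigl((T_a)_a,(U^b)_b\bigr)$ is a PSD factorization of $M$ of order $n+1$. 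I would then take $S$ to be the projection onto $x$-space of the solution set of the linear matrix system $\inp{2\diag(a) - aa^\intercal}{x} + \inp{T_a}{Y} = 1$ for all $a \in \binSet^n$, with matrix variable $Y \in \SDP^{n+1}$. (Equivalently, one could simply invoke the SDP-analogue of the factorization theorem for pairs of polyhedra, \cite{GouveiaParriloThomas2011} and \cite{GRT13}: an order-$r$ PSD factorization of the slack matrix of $P,Q$ produces a spectrahedron sandwiched between $P$ and $Q$ with $\xcp \leqslant r$. I prefer the explicit route here because the factorization is so concrete.)

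With $S$ defined this way there are three things to verify. First, $\xcp(S) \leqslant n+1$ is immediate, since the displayed system is itself a semidefinite EF of $S$ whose matrix variable has order $n+1$. Second, $S \subseteq Q$: for every $a \in \binSet^n$ both $T_a$ and $Y$ are positive semidefinite, so $\inp{T_a}{Y} \geqslant 0$, and the defining equation then forces $\inp{2\diag(a) - aa^\intercal}{x} \leqslant 1$ for all $x \in S$; these are exactly the inequalities defining $Q$. Third, $P \subseteq S$: since $S$ is convex (it is a spectrahedron, being the projection of an affine slice of $\SDP^{n+1}$), it suffices to check that each vertex $bb^\intercal$ of $\COR(n)$ lies in $S$. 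For $x = bb^\intercal$ pick $Y = U^b \in \SDP^{n+1}$; using $b_i^2 = b_i$ one computes $\inp{2\diag(a) - aa^\intercal}{bb^\intercal} = 2a^\intercal b - (a^\intercal b)^2 = 1 - (1 - a^\intercal b)^2$, and since $\inp{T_a}{U^b} = (1-a^\intercal b)^2$, the two terms sum to $1$; hence $(bb^\intercal, U^b)$ satisfies the system and $bb^\intercal \in S$.

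I do not expect a genuine obstacle: this is the \emph{easy} direction — producing a small SDP — and it collapses entirely onto the fact, already recorded above, that $M_{ab} = (1 - a^\intercal b)^2$ admits the order-$(n+1)$ PSD factorization \eqref{eq:PSD_factorization}. The only points needing a little care are bookkeeping: confirming that $S$ so constructed really is sandwiched between $P$ and $Q$ (rather than, say, strictly inside a proper subset of $Q$, which is harmless but should be phrased correctly), and recalling that the ``size'' of a semidefinite EF is the order of the PSD cone, so the bound is $n+1$ and not the number $2^n$ of linear equations involved. The interest of the lemma comes entirely from juxtaposing it with Theorem~\ref{thm:corInapprox}, where the same pair $P,\rho Q$ has LP extension complexity $2^{\Omega(n^{1-2\beta})}$; that juxtaposition is supplied by the earlier results and requires nothing further here.
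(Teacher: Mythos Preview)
Your proposal is correct and follows essentially the same route as the paper's own proof: you define $S$ via the explicit semidefinite system $\inp{2\diag(a)-aa^\intercal}{x}+\inp{T_a}{Y}=1$, $Y\in\SDP^{n+1}$, built from the PSD factorization \eqref{eq:PSD_factorization}, and then verify $P\subseteq S\subseteq Q$ exactly as the paper does. If anything, you include slightly more detail (the explicit computation $\inp{2\diag(a)-aa^\intercal}{bb^\intercal}=1-(1-a^\intercal b)^2$ and the reduction to vertices via convexity), but the argument is the same.
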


Our final result is the following inapproximability 
theorem for spectrahedra.
Let us denote the closed \(\varepsilon\)-neighbourhood of \(S\)
in the \(\ell_{1}\)-norm by
\(S^{\varepsilon} \coloneqq
\set{x \in \R^{n \times n}}{\exists x_{0} \in S \colon
  \onorm{x - x_{0}} \leqslant \varepsilon}\).

\begin{thm}[Polyhedral inapproximability of SDPs]
\label{thm:SDPConeInapprox}
Let $\rho \geqslant 1$, and let $n$ be a positive integer.
Then there exists a spectrahedron \(S \subseteq \R^{n \times n}\) 
with \(\xcp(S) \leqslant n+1\) such that for every polyhedron $K$ with
$S \subseteq K \subseteq S^{\rho - 1}$ the following hold:
\begin{enumerate}
\item\label{item:11}
If \(\rho\) is a fixed constant, then \(\xc(K) = 2^{\Omega(n)}\).
\item\label{item:12}
If \(\rho = O(n^\beta)\) for some constant \(\beta < 1/2\), then
$\xc(K) = 2^{\Omega(n^{1 - 2 \beta})}$.
\end{enumerate}
\begin{proof}
By Lemma~\ref{lem:sandw-spectr}, there is a spectrahedron \(S\) with
\(P \subseteq S \subseteq Q\) and \(\xcp(S) \leqslant
n+1\).
We now show \(S^{\rho - 1} \subseteq \rho Q\).
Let \(x \in S^{\rho - 1}\), and let \(x_{0} \in S\)
with \(\onorm{x - x_{0}} \leq \rho - 1\).
As \(S \subseteq Q\), we also have \(x_{0} \in Q\),
hence for every \(a  \in \{0,1\}^{n}\) we obtain
\begin{equation*}
 \begin{split}
  \inp{2\diag(a)-aa^\intercal}{x}
  &
  =
  \inp{2\diag(a)-aa^\intercal}{x - x_{0}} +
  \inp{2\diag(a)-aa^\intercal}{x_{0}}
  \\
  &
  \leq
  \underbrace{\maxNorm{2\diag(a)-aa^\intercal}}_{\leq 1}
  \cdot \underbrace{\onorm{x - x_{0}}}_{\leq \rho - 1} + 1
  \leq \rho.
 \end{split}
\end{equation*}
Therefore \(x \in \rho Q\).
Therefore, \(S^{\rho - 1} \subseteq \rho Q\) for $\rho
\geqslant 1$. If now \(K\) is a polyhedron such that \(S \subseteq 
K \subseteq S^{\rho - 1}\)
then also \(P \subseteq K \subseteq \rho Q\).
The result thus follows from Theorem~\ref{thm:corInapprox}.
\end{proof}
\end{thm}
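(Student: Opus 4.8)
The plan is to reduce Theorem~\ref{thm:SDPConeInapprox} to Theorem~\ref{thm:corInapprox} by chaining together the spectrahedron $S$ produced by Lemma~\ref{lem:sandw-spectr} with a suitable dilation of the outer polyhedron $Q$. Concretely, I would first invoke Lemma~\ref{lem:sandw-spectr} to fix a spectrahedron $S$ with $P = \COR(n) \subseteq S \subseteq Q = Q(n)$ and $\xcp(S) \leqslant n+1$; this immediately handles the claimed bound on the semidefinite extension complexity. The crux is then to show that the $\ell_1$-neighbourhood $S^{\rho-1}$ is sandwiched inside $\rho Q$, i.e.\ $S \subseteq S^{\rho-1} \subseteq \rho Q$. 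Once this is established, any polyhedron $K$ with $S \subseteq K \subseteq S^{\rho-1}$ automatically satisfies $P \subseteq K \subseteq \rho Q$, and Theorem~\ref{thm:corInapprox} applies verbatim to give $\xc(K) = 2^{\Omega(n)}$ in the constant regime and $\xc(K) = 2^{\Omega(n^{1-2\beta})}$ in the $\rho = O(n^\beta)$ regime.

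The key inclusion $S^{\rho-1} \subseteq \rho Q$ is the one computational step. Recall $\rho Q = \{x \mid \inp{2\diag(a)-aa^\intercal}{x} \leqslant \rho,\ a \in \{0,1\}^n\}$. Given $x \in S^{\rho-1}$, pick $x_0 \in S$ with $\onorm{x - x_0} \leqslant \rho - 1$; since $S \subseteq Q$ we have $\inp{2\diag(a)-aa^\intercal}{x_0} \leqslant 1$ for every $a$. I would then bound the defect by H\"older's inequality:
\[
\inp{2\diag(a)-aa^\intercal}{x - x_0} \leqslant \maxNorm{2\diag(a)-aa^\intercal}\cdot\onorm{x-x_0}.
\]
The point here is that the entries of $2\diag(a)-aa^\intercal = \diag(a) + (\diag(a) - aa^\intercal)$ lie in $\{-1,0,1\}$, so its $\ell_\infty$-norm is at most $1$; hence the defect is at most $\rho - 1$, giving $\inp{2\diag(a)-aa^\intercal}{x} \leqslant 1 + (\rho-1) = \rho$, as needed. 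This matching of the $\ell_1$-ball used to define $S^{\rho-1}$ with the $\ell_\infty$-bounded normal vectors of $Q$ is precisely why the neighbourhood is taken in the $\ell_1$-norm.

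The only potential obstacle is the elementary verification that $\maxNorm{2\diag(a)-aa^\intercal} \leqslant 1$ for all $a \in \{0,1\}^n$; this is immediate since off-diagonal entries are $-a_ia_j \in \{-1,0\}$ and diagonal entries are $2a_i - a_i^2 = a_i \in \{0,1\}$. Everything else is bookkeeping: the inclusion $S \subseteq S^{\rho-1}$ is trivial (take $x_0 = x$), and the transitivity $P \subseteq S \subseteq K \subseteq S^{\rho-1} \subseteq \rho Q$ feeds directly into Theorem~\ref{thm:corInapprox}. I would write the proof in exactly this order: recall $S$ from Lemma~\ref{lem:sandw-spectr}, prove $S^{\rho-1} \subseteq \rho Q$ via the H\"older estimate, note that $S \subseteq K \subseteq S^{\rho-1}$ forces $P \subseteq K \subseteq \rho Q$, and conclude by citing Theorem~\ref{thm:corInapprox}.
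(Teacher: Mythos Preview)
Your proposal is correct and follows essentially the same argument as the paper's proof: invoke Lemma~\ref{lem:sandw-spectr} for the spectrahedron $S$, prove $S^{\rho-1} \subseteq \rho Q$ via the H\"older bound $\inp{2\diag(a)-aa^\intercal}{x-x_0} \leqslant \maxNorm{2\diag(a)-aa^\intercal}\cdot\onorm{x-x_0} \leqslant \rho-1$, and conclude through Theorem~\ref{thm:corInapprox}. Your explicit verification that the entries of $2\diag(a)-aa^\intercal$ lie in $\{-1,0,1\}$ is a nice addition that the paper leaves implicit.
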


\section{Concluding Remarks}

We have introduced a general framework to study 
approximation limits of small LP relaxations. Given 
a polyhedron \(Q\) encoding admissible objective 
functions and a polytope \(P\) encoding feasible 
solutions, we have proved that any LP relaxation 
sandwiched between \(P\) and a dilate \(\rho Q\) 
has extension complexity at least the nonnegative 
rank of the slack matrix of the pair $P$, $\rho Q$.

This yields a lower bound depending \emph{only}
on the linear encoding of the problem at hand, and 
applies \emph{independently} of the structure of the actual 
relaxation. By doing so, we obtain unconditional lower 
bounds on integrality gaps for small LP relaxations, which 
hold even in the unlikely event that $P = NP$. 

We have proved that every polynomial-size LP relaxation for 
(a natural linear encoding of) CLIQUE has essentially an
$\Omega(\sqrt{n})$ integrality gap. As mentioned above, this 
was recently improved by \cite{BM13} to a tight 
$\Omega(n^{1-\epsilon})$ integrality gap, see 
\cite{BP2013commInfo} for a short proof and many
generalizations.

Finally, our work sheds more light on the inherent 
limitations of LPs in the context of combinatorial optimization 
and approximation algorithms, in particular, in comparison
to SDPs. We provide strong evidence that certain 
approximation guarantees can only be achieved via 
non-LP-based techniques (e.g., SDP-based or combinatorial).

Actually, our work has inspired \cite{CLRS13} to prove 
lower bounds on the size of LPs for approximating Max~CUT,
Max~$k$-SAT and in fact any Max~CSP. Among other results, 
they obtain a $n^{\Omega(\log n / \log \log n)}$ lower 
bound on the size of any $(2-\epsilon)$-approximate EF 
for Max~CUT (of course, with nonnegative weights). 
\cite{CLRS13} thus proving the following conjecture
on Max~CUT that we stated in an earlier version of 
this text: 
\begin{thm}\cite{CLRS13}
It is not possible
to approximate Max~CUT with LPs of poly-size within 
a factor better than \(2\).
\end{thm}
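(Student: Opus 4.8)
\noindent\emph{Proof idea.}
The plan is to run Max~CUT with nonnegative weights through the machinery of Theorem~\ref{thm:framework}. Fix the natural linear encoding in dimension $d=\binom{n}{2}$: the feasible solutions are the cut vectors $\chi^{\delta(X)}$, so the inner polytope is $P=\CUT(n)$, while the admissible objective functions are exactly the nonnegative weight vectors $w\geqslant\mathbf{0}$; this gives an outer polyhedron $Q=Q(n)$ which is now a relaxation of $\CUT(n)$ rather than equal to it, in contrast with the arbitrary-weight case of Proposition~\ref{prop:MAXCUT_arbitrary} where $P=Q$ and the unbounded cut-cone trick is available. By Theorem~\ref{thm:framework}, the minimum size of a $\rho$-approximate EF of this encoding is $\nnegrk(S^{P,\rho Q})+\Theta(1)$, so the whole problem reduces to proving that this nonnegative rank is super-polynomial in $n$ for every fixed $\rho<2$.

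By the shift identity preceding Theorem~\ref{thm:framework}, the entries of $S^{P,\rho Q}$ are of the form $s_{w,X}+(\rho-1)h_{w}$, where $s_{w,X}\geqslant 0$ is the slack of the cut $\delta(X)$ against a valid inequality $\inp{w}{x}\leqslant h_{w}$ with $h_{w}=\max_{Y}\inp{w}{\chi^{\delta(Y)}}$. Imitating Proposition~\ref{prop:MAXCUT_arbitrary} and Theorem~\ref{thm:corInapprox}, the first thing I would try is to produce a family of nonnegative weight vectors $w^{(a)}$, normalised so that $h_{w^{(a)}}=1$, together with cuts $\delta(X^{(b)})$, indexed by $a,b$ ranging over $\binSet^{m}$ with $m=\Theta(n)$, such that the corresponding submatrix of $S^{P,Q}$ equals $(1-a^\intercal b)^{2}$ (which is $\text{UDISJ}(a,b)$ on the promise $a^\intercal b\in\binSet$), i.e.\ contains a copy of the UDISJ matrix. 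If this worked, $S^{P,\rho Q}$ would contain a $\rho$-extension of UDISJ on $m$ bits, and, $\rho$ being a constant, Theorem~\ref{thm:highNNRPert}\ref{item:1} would instantly give $\nnegrk(S^{P,\rho Q})=2^{\Omega(n)}$ --- far stronger than what the statement asks for.

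The hard part, and the actual content of the theorem, is that this direct embedding clashes with nonnegativity. The clean correlation-cone construction underlying Proposition~\ref{prop:MAXCUT_arbitrary} builds the valid inequalities from the rank-one matrices $\binom{-1}{a}\binom{-1}{a}^\intercal$, whose associated weight vectors have negative coordinates and are thus not admissible here; one cannot simply reuse it, and no obvious renormalisation makes all the relevant pairings nonnegative while keeping the $\binSet$-valued promise structure intact. The route around this obstacle, which is the one taken in \cite{CLRS13}, is a lossy detour through hierarchies: (i) replace the single-shot UDISJ embedding by a structural theorem showing that \emph{any} size-$s$ LP sandwiched between $P$ and $\rho Q$ can be converted, with controlled loss, into a Sherali--Adams relaxation of a correspondingly bounded level and no worse integrality gap --- this is a Fourier-analytic argument over $\binSet^{n}$ that decomposes an arbitrary nonnegative factorisation into low-degree ``juntas'' --- and (ii) feed in the known Sherali--Adams integrality gap of $2-\epsilon$ for Max~CUT; combining its level-versus-gap tradeoff with the size-to-level relationship of the conversion then forces $s=n^{\Omega(\log n/\log\log n)}$. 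Step (i) is where essentially all the difficulty sits: Theorem~\ref{thm:framework} supplies only the easy reduction to nonnegative rank, and it is precisely the lossiness of the LP-to-Sherali--Adams conversion --- not that reduction --- that makes the resulting bound quasi-polynomial rather than exponential.
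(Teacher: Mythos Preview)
The paper does not contain a proof of this statement: it is attributed to \cite{CLRS13} and stated as an external result that settled a conjecture from an earlier version of the present paper. There is therefore no ``paper's own proof'' to compare your proposal against.

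That said, your sketch is an accurate high-level account of how \cite{CLRS13} proceeds. You correctly identify that the direct UDISJ embedding underlying Proposition~\ref{prop:MAXCUT_arbitrary} and Theorem~\ref{thm:corInapprox} breaks down because the admissible objective functions for Max~CUT with nonnegative weights exclude the sign-indefinite vectors needed for the correlation-cone trick, and that the actual argument detours through a structural reduction from arbitrary small LPs to bounded-level Sherali--Adams relaxations, combined with the known $2-\epsilon$ integrality gap for Sherali--Adams on Max~CUT. You also correctly flag that the lossiness of this LP-to-hierarchy conversion is what produces the quasi-polynomial $n^{\Omega(\log n/\log\log n)}$ bound rather than the exponential bound one would get from a clean UDISJ embedding. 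What you have written is a roadmap rather than a proof --- the Fourier-analytic decomposition into low-degree juntas and the quantitative tradeoffs are not supplied --- but as a summary of the strategy it is faithful, and nothing more is present in the paper itself.
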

This is in stark contrast
with the ratio achieved by the SDP-based algorithm of \cite{GoemansWilliamson95} which is known to be optimal, 
assuming the Unique Games Conjecture~\cite{Khot02a,%
KhotKMO04, MosselOO05}.

Next, about CLIQUE itself, here is an interesting question 
that this paper leaves open, as pointed out by one of the 
referees: find an $n$-vertex graph $G$ for which the clique
polytope 
\(
\CLIQUE(G) \coloneqq \conv{\{\chi^K \in \RR^{V(G)} \mid K \subseteq
V(G) \text{ is a clique}\}}
\) 
has no polynomial-size $n^{1-\epsilon}$-approximate EF. Note that
encoding CLIQUE through the clique polytope does not satisfy our
faithfulness condition.

Finally, so far no strong lower bounding technique for semidefinite 
EFs are known. It is plausible that in the near future we 
will see lower bounding techniques on the PSD rank that would 
be suited for studying approximation limits of SDPs. (We remark 
however that such bounds should not only argue on the zero/nonzero 
pattern of a slack matrix.)

\section{Acknowledgements}
  
We would like to thank the two referees for their time and comments
which contributed to improve the text.  
  
\bibliographystyle{abbrvnat}
\bibliography{../bibliography}

\end{document}